\newtheorem{theorem}{Theorem}
\renewcommand{\qedsymbol}{$\blacksquare$}
\begin{document}

\title{Chaotic renormalization group flow and entropy gradients over Haros graphs}\

\author{Jorge Calero$^{1,2}$, Bartolo Luque$^{1}$ and Lucas Lacasa$^{3}$ }
\email{lucas@ifisc.uib-csic.es}
\affiliation{$^1$Departamento de Matem\'atica Aplicada, ETSI Aeron\'autica y del Espacio, Universidad Polit\'ecnica de Madrid, Madrid, Spain;\\$^2$Signal and Communications Theory and Telematic Systems and Computing, Rey Juan Carlos University, Madrid, Spain \\$^3$Institute for Cross-Disciplinary Physics and Complex Systems IFISC (CSIC-UIB), Palma de Mallorca, Spain.}
\date{\today}

\date{\today}
\begin{abstract}

Haros graphs have been recently introduced as a set of graphs bijectively related to real numbers in the unit interval. Here we consider the iterated dynamics of a graph operator $\cal R$ over the set of Haros graphs. This operator was previously defined in the realm of graph-theoretical characterisation of low-dimensional nonlinear dynamics, and has a renormalization group (RG) structure. We find that the dynamics of $\cal R$ over Haros graphs is complex and includes unstable periodic orbits or arbitrary period and non-mixing aperiodic orbits, overall portraiting a chaotic RG flow. We identify a single RG stable fixed point whose basin of attraction is the set of rational numbers, associate periodic RG orbits with (pure) quadratic irrationals and aperiodic RG orbits with (non-mixing) families of non-quadratic algebraic irrationals and trascendental numbers. Finally, we show that the entropy gradients inside periodic RG orbits are constant. We discuss the possible physical interpretation of such chaotic RG flow and speculate on the entropy-constant periodic orbits as a possible confirmation of  a (quantum field-theoretic) $c$-theorem applied inside the invariant set of a RG flow.
\end{abstract}

\pacs{}
\keywords{} \maketitle


\newpage

\section{Introduction}

The importance of studying the phase diagram of the Renormalization Group (RG) operator of statistical mechanics and quantum field theories stems from the fact that the RG invariant set yields physical insight. Namely, attractive (stable) fixed points of the RG flow are connected with {\it observable} thermodynamical phases -- such is the case of the low temperature and high temperature fixed points in the RG analysis of the Ising model, which are linked to the existence of the ferromagnetic and paramagnetic phases respectively--. In the same vein, a repelling (unstable) fixed point of the RG flow yields a singularity in the free energy of the system and is therefore naturally linked with a (critical) phase transition -- such is the case of the Curie temperature in the Ising model, where the system is set at the phase transition between para and ferromagnetic phases. Moreover, the critical exponents that fully characterise the nature of the phase transition (e.g. the universality class) are connected by simple relations to the linearised RG transformation around the unstable fixed point.\\
Conventionally one expects that RG attractors consist of a finite number of isolated fixed points, and authors argue that monotonicity theorems such as the $c$-theorem in quantum field theories in principle precludes more complex RG flows. The quest for these exotic RG dynamics --and their physical interpretation-- was already proposed by Wilson \cite{Wilson, Wilson2}. In the realm of condensed matter physics, Derrida and co-authors \cite{Derrida} showed that the appearance of unstable periodic orbits in the RG flow yield the onset of an infinite number of singularities in the free energy, i.e. the onset of an infinite number of critical temperatures. However, the quest for physical systems whose RG flow brings about a complex phase diagram has remained largely elusive, despite a few notable examples originating in quantum field theory \cite{Wilson3, Morozov, bosschaert2022chaotic} and condensed matter \cite{Huse, Mckay} (see also \cite{jiang2021chaotic} and references therein for a recent example of a chaotic RG flow in a Potts model with long range interactions). From a physical viewpoint, observing chaotic RG flow suggests that the system exhibits qualitatively different statistical
behavior at different scales --a salient feature of e.g. spin glasses--.\\
In parallel, RG treatments have also been applied in the realm of nonlinear dynamics \cite{schuster2006deterministic}. When considering the transition to chaos in low-dimensional systems, the RG phase portrait allows to distinguish the different observable attractors and further exploration of the critical attractors provides a means to explore the universality of these transitions \cite{feigenbaum1978quantitative,feigenbaum1982quasiperiodicity,hu1982exact}. In such a context, the so-called visibility and horizontal visibility graphs \cite{LacasaVisibility, luque2009horizontal} have been introduced as graph-theoretical transformations of trajectories into graphs, opening the possibility of analysing different types of dynamics through the lens of graph theory and network science. In particular, a graph-theoretical renormalization operator $\cal R$ has been introduced, and its attractors obtained from the trajectories of low-dimensional chaotic systems have been used to provide a (graph-theoretical) RG description of the classical routes to chaos, i.e. the Feigenbaum, the quasiperiodic and the Pomeau-Manneville scenarios \cite{FeigenbaumGraphs, AnalyticalFeigenbaum, intermit, QuasiperiodicGraphs}.\\

Very recently, and inspired by the application of horizontal visibility graphs (HVGs) to the quasiperiodic route, an independent graph-theoretic description of real numbers --rationals and irrationals-- has been proposed, and the objects thereby associated to each number have been coined as Haros graphs \cite{HarosPaper}.  Each real number is, accordingly, bijectively related to a different Haros graph, thus building a bridge between number theory and graph theory. Since Haros graphs are a subset of HVGs, the same RG operator $\cal R$ previously defined on HVGs could be applied to Haros graphs, even if the latter are not in principle naturally related to the trajectories of any underlying dynamical system. In this paper we thereby define and explore the iteration dynamics of such RG operator over the set of Haros graphs. Interestingly, we find that the iterated dynamics $\cal R$ is extremely complex, showing distinctive properties of a chaotic RG flow. Furthermore, we show that different classes of irrational numbers naturally cluster together under the action of the RG operator on Haros graphs, defining classes of periodic orbits for different families of quadratic irrationals, whereas other algebraic and trascendental irrationals cluster together in families of aperiodic RG orbits. To provide analytical insight into the dynamics of the RG operator, we rigorously prove that the graph $\cal R$ is topologically conjugate to the interval modified Farey map \cite{Adamczewski, AMM}, a chaotic map itself topologically conjugate to the shift map via Minkowski's question mark function.
We end up exploring entropic flows along the RG action, following classical works both in QFT \cite{zamolodchikov1986irreversibility, cardy1988there} and nonlinear dynamics \cite{Robledo} that study the irreversible nature of RG flow and its quantification.\\

The rest of the paper is as follows: in section \ref{sec:preliminaries} we provide the necessary preliminaries, introduce Haros graphs and their main properties. In section \ref{sec:R} we introduce the graph operator $\cal R$, show that it performs a topological thinning of the system, and prove that when applied to an arbitrary Haros graph, the resulting graph is again a Haros graph, thereby establishing that its iteration over the set of Haros graphs is a well-defined map and that Haros graphs are renormalizable. In section \ref{sec:dyn} we prove that $\cal R$ is topologically conjugate to an interval map, explicitly finding the (real-valued) coupling constant space where the RG flow takes place and an analytical expression for the RG flow, in terms of a one-dimensional interval map. We leverage on such topological conjugacy to analytically explore the invariant set of $\cal R$, finding a hierarchy of fixed points, periodic orbits (which we associate to quadratic irrationals) and families of non-mixing aperiodic orbits, including algebraic aperiodic orbits (AAO) and trascendental aperiodic orbits (TAO). In section \ref{sec:ent} we explore entropic gradients along the RG flow, both towards the stable fixed point and inside the RG periodic orbits, finding, notably, an entropic function which remains constant throughout each periodic orbit. In section \ref{sec:con} we conclude.


\begin{figure*}[h!]
\includegraphics[width=1\textwidth]{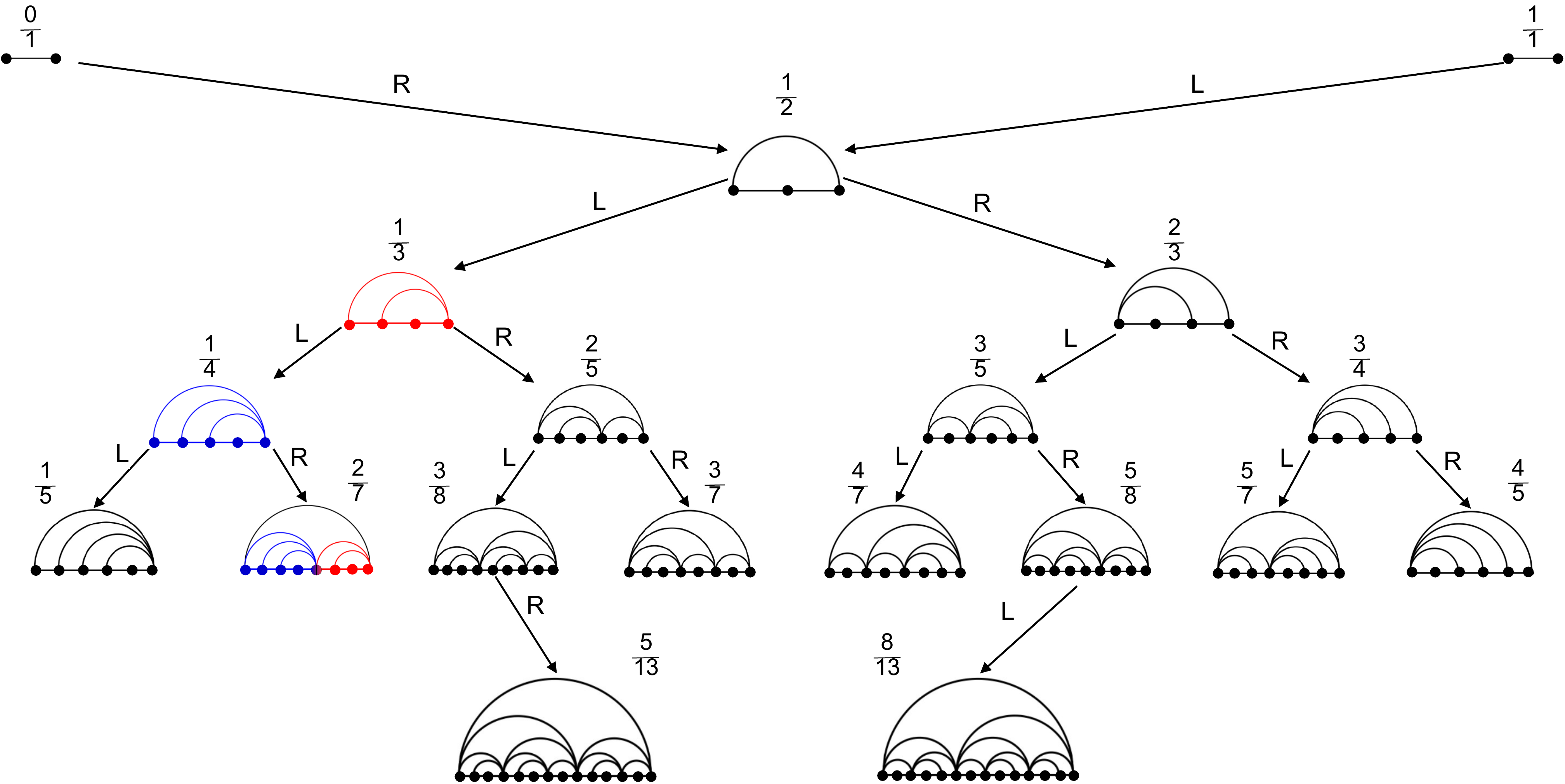}
\caption{Six levels of the Haros graph binary tree with the Haros graphs $G_{p/q}$ associated with the corresponding rational fractions $p/q$ (for space reasons only two of these are shown at the sixth level). On the left, Haros graph $G_{2/7}$ building as concatenation of $G_{1/4}$ (blue color) and $G_{1/3}$ (red color). Therefore $G_{2/7} = G_{1/4} \oplus G_{1/3}$, where the concatenation operator merges the right-extreme node of $G_{1/4}$ with the left-extreme node of $G_{1/3}$, and a link between the new extreme nodes are created. Moreover, any Haros graph can be expressed as a concatenation of two adjacents Haros graphs $G = G_L \oplus G_R$.}
\label{Fig_1}
\end{figure*}

\section{Preliminaries}
\label{sec:preliminaries}
In a recent work \cite{HarosPaper}, a new graph-theoretical representation of the real numbers was introduced. Here we recall the basic facts, but we urge the reader to dive deep in \cite{HarosPaper} for more details.\\
A canonical representation of the unit interval $[0,1]$ was provided through the so-called Farey sequences
$$\mathcal{F}_{n} = \left\lbrace\frac{p}{q} \in [0,1]:\ 0\leq p \leq q \leq n, \ (p,q) = 1 \right\rbrace,$$
where $p$ and $q$ are natural numbers.\\
\noindent The Farey sequences $\mathcal{F}_{n}$ can be generated iteratively via the mediant sum $\frac{p}{q}\oplus\frac{p'}{q'}:=\frac{p+p'}{q+q'}$ of each consecutive par of elements in  $\mathcal{F}_{n-1}$. Thus, a classical way of hierarchically displaying the elements of ${\cal F}_n$ is using the so-called {\it Farey binary tree} (Fig. \ref{Fig_1} illustrates the  first six levels of Farey binary tree).\\
The abovementioned graph-theoretical representation is based on the definition of so-called Haros graphs set $\mathcal{G}$ \cite{HarosPaper}, obtained recursively via an initial graph $G_0=K_2$, and a concatenation  graph-operation $\oplus$ (Fig. \ref{fig:Fig2_FirstFarey} sketches the performance of the graph operator). The construction of Haros graphs replicates the inflationary process of Farey fractions giving a one-to-one correspondence between Farey sequences $\mathcal{F}_n$ and the Haros graph set (see Fig.\ref{Fig_1} for an illustration).  The bijection can be extended to the real interval as  $\lim_{n\to \infty} \mathcal{F}_n = [0,1]$, i.e., $\tau: [0,1] \to \mathcal{G}$ is a bijection defined as $\tau(x) = G_x, \ \forall x\in[0,1]$. By construction, rational numbers $x=p/q$ are associated to finite Haros graphs, whereas irrational numbers map to infinite Haros graphs (see section III in \cite{HarosPaper} for more details). As the set $\mathcal{G}$ can be displayed as a (Haros graph) binary tree, any arbitrary Haros graph can be shown to be reached via a symbolic binary path. Such symbolic paths in the Haros graph binary tree --illustrated in Fig. \ref{Fig_1}--  are themselves intimarely related with the continued fraction expression of the associated number $x$ of the Haros graph $G_x$.

\begin{figure}[h!]
\begin{center}
\includegraphics[width=0.8\textwidth , trim=0cm 17cm 0cm 5cm,clip=true]{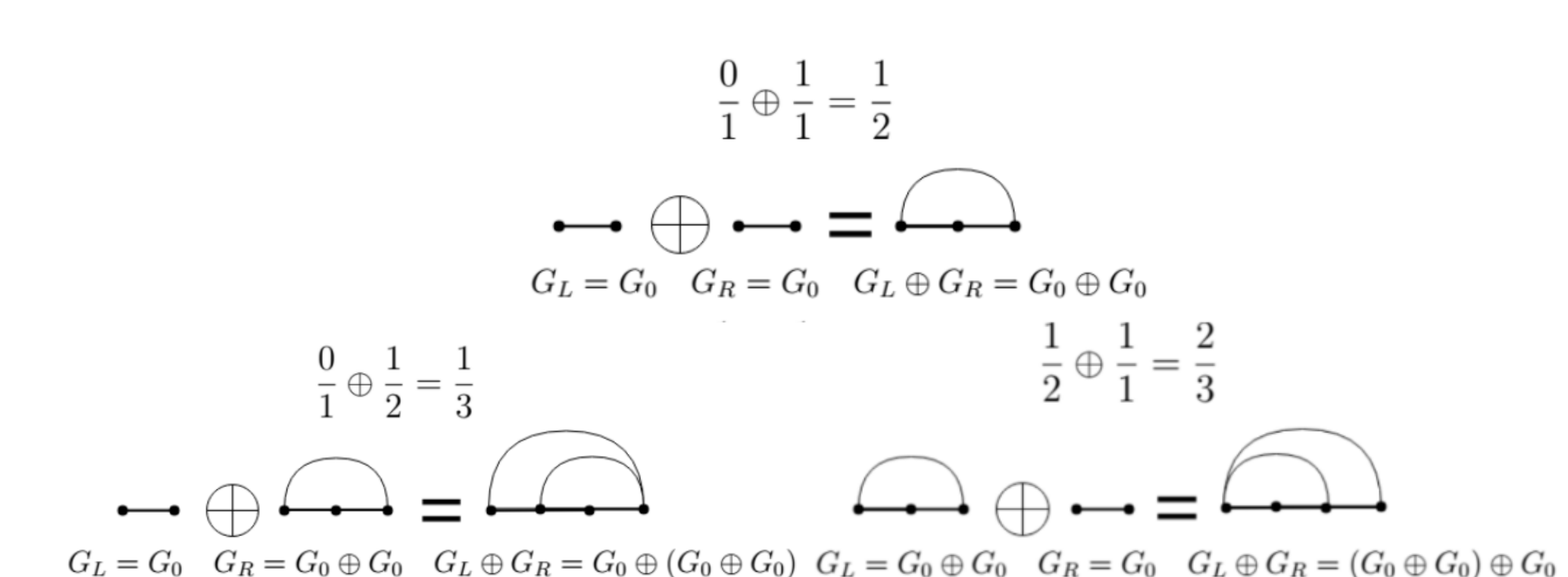}
\caption{ \textbf{First line:} Construction of $G_1 = G_0 \oplus G_0$. The graph concatenation operator merges the extreme nodes: labelling the nodes of $G_0$ as $[1,2]$, the extreme nodes are $2$ for $G_L$ and $1$ for $G_R$. The graph obtained by the concatenation has three nodes and the external nodes ($1$ and $3$ in $G_L \oplus G_R$) are linked by a new vertex.  \  \textbf{Second line:}  Construction of the two possible concatenations between $G_0$ and $G_1$, i.e., $G_0 \oplus G_1$ and $G_1 \oplus G_0$.}
\label{fig:Fig2_FirstFarey}
\end{center}
\end{figure}

\noindent Among others, a key topological property of $G_x$ that was considered in \cite{HarosPaper} to analyse the resulting representation of the reals was the degree distribution $P(k,x)$, interpreted as the probability (normalized frequency) that a node chosen randomly in $G_x$ has degree $k$.
As a matter of fact, the first values of the degree distribution verifies:
\begin{equation*}
P(k,x < 1/2)=\left\{
\begin{array}{ll}
x,  & k=2 \\
1 - 2x, & k=3 \\
0, & k=4
\end{array}%
\right.
; P(k,x > 1/2)=\left\{
\begin{array}{ll}
1 - x,  & k=2 \\
2x - 1, & k=3 \\
0, & k=4.
\end{array}%
\right.
\end{equation*} 
In particular, in \cite{HarosPaper} it was shown that the graph-entropy function $S(x) := - \sum_{k\geq 2} P(k;x)\cdot \log P(k;x)$ to be a fractal, continuous everywhere but differentiable nowhere function. Its global maximum was reached at the Haros graph associated to reciprocal of the Golden number $x=\phi^{-1}$ (and its symmetric $x=1 - \phi^{-1}$), whereas local maxima in each interval $[0,1/n]$ are reached by the Haros graphs associated to an affine transformation of Golden number known as noble numbers: quadratic irrationals with a continued fraction that becomes an infinite sequence of $1$s \cite{Hardy, Schroeder}. This result can be extended to every quadratic irrational, which Haros graphs present a degree distribution $P(k,x)$ with  an exponential tail. Moreover, the local minimums of $S(x)$ are associated to rational numbers.

\section{An RG operator \texorpdfstring{$\mathcal{R}$}{R1} for Haros graphs}
\label{sec:R}


The graph operator $\cal R$ was originally defined in the context of horizontal visibility graphs (HVGs) \cite{FeigenbaumGraphs, AnalyticalFeigenbaum}, as a coarse-graining in graph space. Given a HVG $G$, ${\cal R}(G)$ is a graph where all the inner nodes with degree $k=2$ are decimated, or, equivalently,  where every two nodes that share an adjacent node with degree $k=2$ are coarse-grained into a block node. By applying this topological thinning iteratively to HVGs extracted from trajectories in the Feigenbaum scenario --canonically exemplified by the logistic map $x_{t+1}=\mu x_t(1-x_t)$--, it was shown \cite{FeigenbaumGraphs, AnalyticalFeigenbaum} that $\cal R$ was akin to a Kadanoff-like decimation of the configuration of a spin system. 
Indeed, iteratively applying $\cal R$ in the period-doubling bifurcation cascade leads to a flow in the space of map parameter $\mu$, the only relevant variable. Interestingly, such graph-theoretical RG technique was successfully used to characterise not only Feigenbaum scenario, but other canonical low-dimensional routes to chaos, including the quasiperiodic route \cite{QuasiperiodicGraphs} or Pomeau-Manneville (intermittency) routes \cite{intermit}, where the classic RG flow diagram (a non-trivial, unstable RG fixed point denoting the edge of chaos, and two trivial, stable RG fixed points denoting the regular and chaotic regions) was systematically found for $\cal R$. 
Furthermore, e.g. in the Feigenbaum scenario the macroscopic properties of graphs ensembles under iteration of $\cal R$ reach asymptotically a stationary solution above and below the edge of chaos (i.e. the flow reaches a stable fixed point), whereas at the edge of chaos (e.g. the accumulation point in the Feigenbaum scenario) the graph is invariant under $\cal R$. Any small deviation from the edge of chaos yields a graph whose RG trajectory flows away from it, hence the fixed point is unstable.\\
\begin{figure}[htb]
\includegraphics[width=0.7\textwidth, trim=0cm 2cm 0cm 0cm,clip=true]{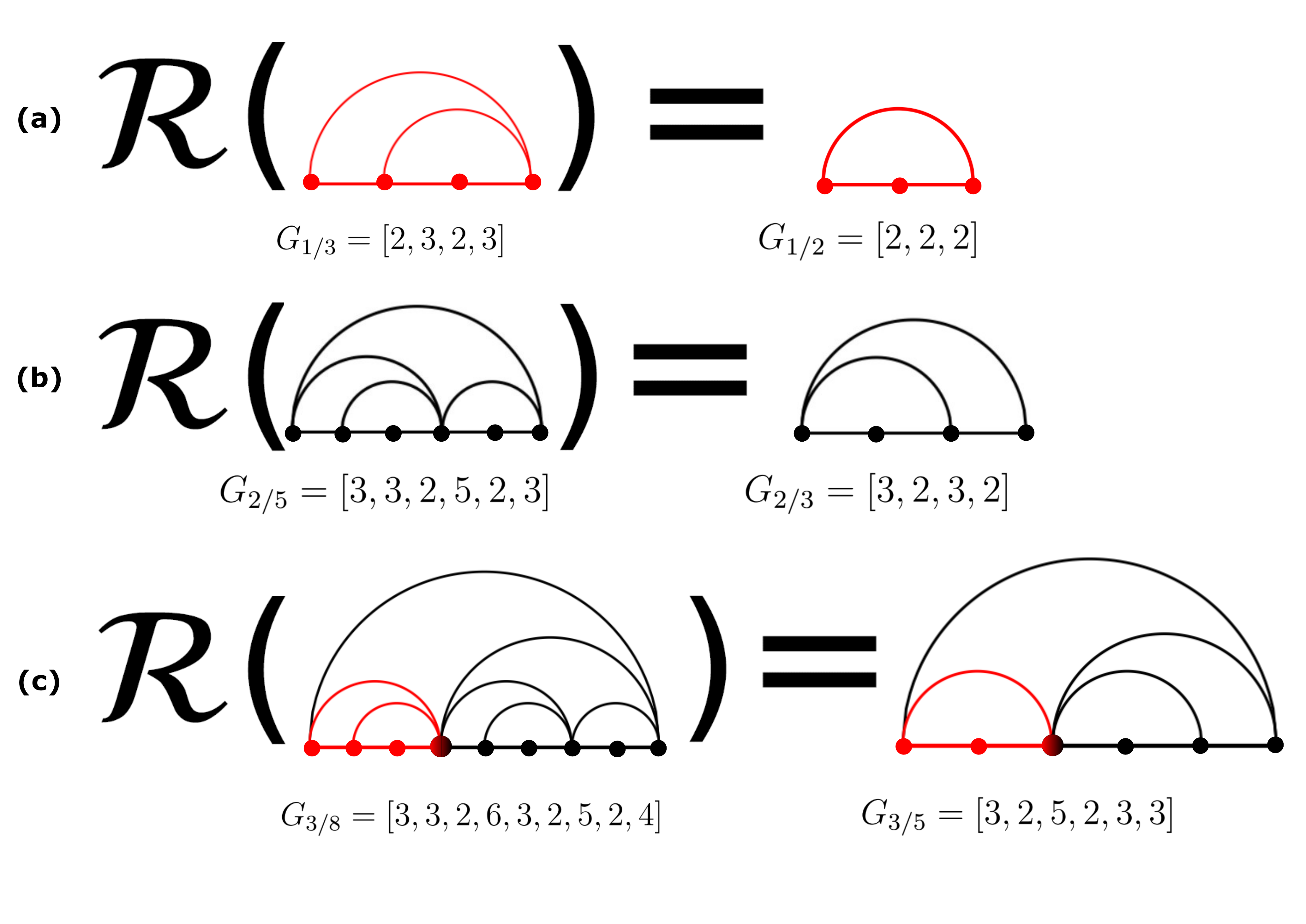}
\caption{Three examples of the renormalization operator $\mathcal{R}$. Given a graph $G$, ${\cal R}(G)$ topologically decimates $G$ by removing all $k=2$ nodes and gluing together the adjacent nodes. (Panel a) Application of $\cal R$ to the  Haros graphs $G_{1/3}$, whose degree sequence is $[2,3,2,3]$. Applying $\mathcal{R}$, the inner node with degree $k=2$ and its links are removed. The resulting graph is another Haros $G_{1/2} = [2,2,2]$.  (Panel b)  The degree sequence of the Haros graph $G_{2/5}$ in this example is ${\bf k}=[3, 3, 2, 5, 2, 3]$, while the one after application of $\mathcal{R}$ is ${\bf k}=[3, 2, 3, 2]$. (Panel c)  Applying the renormalization to $G_{3/8} = G_{1/3} \bigoplus G_{2/5}$, the Haros graph obtained is the concatenation of  $\mathcal{R}(G_{1/3})$ and $\mathcal{R}(G_{2/5})$, i.e., $\mathcal{R}(G_{3/8}) = \mathcal{R}(G_{1/3}) \bigoplus \mathcal{R}(G_{2/5}) =  G_{1/2} \bigoplus G_{2/3} = G_{3/5}$.}
\label{Fig_2}
\end{figure}
\noindent It is important to highlight that $\cal R$ is more similar to a real-space RG transformation (actually, a decimation) on the graph \cite{Newman}, than to other graph renormalization methods such as box-covering renormalization \cite{Song2005, Song2006, Radicchi}. Since Haros graphs are a subset of the set of HVGs \cite{HarosPaper}, it is legitimate to apply the $\cal R$ operator to Haros graphs, even if Haros graphs are not in principle the HVGs of a pre-specified trajectory extracted from some map. For illustration, in Figure \ref{Fig_2} we provide examples of the action of $\cal R$ for different Haros graphs.

\subsection{\texorpdfstring{$\mathcal{R}$}{R1} is a map of the Haros graph set onto itself, i.e., $\cal G$ is renormalizable}
\noindent The first important question is whether the set of Haros graphs is closed under the action of $\cal R$, i.e. if $G$ is an arbitrary Haros graph, whether ${\cal R}(G)$ is also Haros. From a physical jargon point of view, this would imply that the theory is renormalizable, as the system $G\in {\cal G}$ can be expressed in a similar fashion (i.e. ${\cal R}(G)$ is also Haros) but with a different `coupling constant'. Below we prove that such is indeed the case.


\begin{theorem}
\label{Tma1}
Let $\cal G$ be the set of Haros graphs.  Then $\forall G\in {\cal G}$, ${\cal R}(G) \in \mathcal{G}$. Moreover, if $G$ and $G'$ are adjacent in the Haros graph tree, then ${\cal R}(G)$ and ${\cal R}(G')$ are adjacent as well.
\end{theorem}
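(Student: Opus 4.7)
The plan is to prove the theorem by induction on the level $n$ of the Haros graph binary tree, with a single commutativity lemma as the main engine. Recall that any Haros graph $G$ at level $n+1\geq 2$ decomposes uniquely as $G=G_L\oplus G_R$, where $G_L,G_R$ are its Farey-adjacent parents at levels $\leq n$. Both the renormalizability statement ($\mathcal{R}(G)\in\mathcal{G}$) and the adjacency-preservation statement will be proved together by this induction.

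The crucial intermediate step is the commutativity lemma: for every pair of Haros graphs $(G_L,G_R)\neq(K_2,K_2)$,
\[
\mathcal{R}(G_L\oplus G_R)=\mathcal{R}(G_L)\oplus\mathcal{R}(G_R).
\]
I would establish this by a local degree analysis at the concatenation seam. The operator $\oplus$ identifies the right-extreme $r$ of $G_L$ with the left-extreme $\ell$ of $G_R$ into a new inner node $m$ of degree $\deg_L(r)+\deg_R(\ell)$, and adds a single arch between the new extremes $\ell_L$ and $r_R$. Because every non-trivial Haros graph has extreme nodes of degree $\geq 2$, the hypothesis $(G_L,G_R)\neq(K_2,K_2)$ forces $\deg(m)\geq 3$, so $m$ is immune to decimation. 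All other inner nodes of $G_L$ and $G_R$ keep their degrees inside $G_L\oplus G_R$, and the newly created arch $\ell_L r_R$ touches only extreme nodes. Hence the decimation of $G_L\oplus G_R$ factors cleanly: it removes exactly the strictly-inner degree-$2$ nodes of each half, separately, and the resulting graph is precisely $\mathcal{R}(G_L)\oplus\mathcal{R}(G_R)$. The excluded case $G_L=G_R=K_2$, where $m$ itself would have degree $2$ and the formula genuinely fails ($\mathcal{R}(K_2\oplus K_2)=K_2$ whereas $\mathcal{R}(K_2)\oplus\mathcal{R}(K_2)=G_{1/2}$), is absorbed into the base of the induction.

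With the lemma in hand, the main theorem follows. For the base cases, $\mathcal{R}(K_2)=K_2$ is trivial and $\mathcal{R}(G_{1/2})=K_2$ is a direct computation (the single inner node has degree $2$; decimating it while keeping the arch yields $K_2$); both are Haros and adjacency is trivial. For the inductive step at level $n+1\geq 2$, write $G=G_L\oplus G_R$ with $G_L,G_R$ adjacent Haros graphs at levels $\leq n$, not both equal to $K_2$; by the inductive hypothesis $\mathcal{R}(G_L),\mathcal{R}(G_R)\in\mathcal{G}$ and are themselves adjacent, so their concatenation is a bona fide Haros graph, and by the lemma this concatenation equals $\mathcal{R}(G)$. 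The adjacency preservation statement is then a one-line corollary: if $G,G'$ are adjacent Haros graphs, the mediant $G\oplus G'$ is itself a Haros graph, so $\mathcal{R}(G)\oplus\mathcal{R}(G')=\mathcal{R}(G\oplus G')\in\mathcal{G}$, and the uniqueness of the Farey-adjacent decomposition of any Haros graph forces $\mathcal{R}(G)$ and $\mathcal{R}(G')$ to be adjacent.

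The hard part will be the commutativity lemma, and within it the verification that decimation acts genuinely locally across the concatenation seam: every edge created by removing a degree-$2$ inner node stays inside the half where that node originally lived, and the three distinguished arch-type edges (the arches inherited from $G_L$ and $G_R$, and the newly added arch of $G_L\oplus G_R$) survive unchanged under both orderings of the two operations. The $(K_2,K_2)$ exception is the only point where this locality genuinely breaks, and it is precisely why the induction must be seeded at level $1$ rather than at level $0$.
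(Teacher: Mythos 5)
Your overall strategy coincides with the paper's: induction over the levels of the Farey/Haros binary tree, powered by the distributivity $\mathcal{R}(G_L\oplus G_R)=\mathcal{R}(G_L)\oplus\mathcal{R}(G_R)$, which holds because the seam node inherits degree at least $3$ and therefore survives decimation. Your explicit isolation of the degenerate pair $(K_2,K_2)$, where the identity genuinely fails ($\mathcal{R}(K_2\oplus K_2)=K_2$ but $\mathcal{R}(K_2)\oplus\mathcal{R}(K_2)=G_{1/2}$) and must be absorbed into the base cases, is in fact more careful than the published argument, which simply asserts that the seam degree is $\geq 3$ ``by construction''.

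The one genuine weak point is how you close the adjacency-preservation half of the statement. You announce a joint induction whose renormalizability step at level $n+1$ explicitly consumes the inductive hypothesis that $\mathcal{R}(G_L)$ and $\mathcal{R}(G_R)$ are adjacent, but you then prove adjacency preservation only afterwards, as a ``one-line corollary'' that applies $\mathcal{R}$ to $G\oplus G'$ --- a graph one level deeper than the pair in question. As structured, the induction never discharges the adjacency clause of its own hypothesis: adjacency for a pair involving a level-$(n+1)$ graph would require, via your corollary, renormalizability of a level-$(n+2)$ graph, which in turn requires adjacency at level $n+1$. In addition, the corollary invokes an unproven fact, namely that a Haros graph admits a unique decomposition as a concatenation of Haros graphs and that this decomposition is necessarily into its adjacent Farey parents. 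Both problems vanish if you establish adjacency inside the inductive step, using only what you already have: $\mathcal{R}(G)=\mathcal{R}(G_L)\oplus\mathcal{R}(G_R)$ is by construction the mediant child of the adjacent pair $\mathcal{R}(G_L),\mathcal{R}(G_R)$ in the Haros tree, hence adjacent to each of them, and the only graphs adjacent to the newly created $G$ within $\mathcal{L}_{n+1}$ are its two parents $G_L,G_R$. (The paper's own write-up also leaves this last observation implicit, but it keeps the adjacency claim inside the induction rather than deferring it, so it avoids the circularity.)
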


\begin{proof}
 The key of the proof resides on the fact that the concatenation operator $\oplus$ and the graph operator $\mathcal{R}$ are associative. It has to be noted that if $G$ is a Haros graph, by construction, then $G = G_L \bigoplus G_R$, where $G_L, G_R$ are also adjacent Haros graphs. Hence, we will have $\mathcal{R}(G) = \mathcal{R}( G_L \bigoplus G_R ) = \mathcal{R}( G_L ) \bigoplus \mathcal{R}( G_R )$.\\ 
 Let be  $\ell_i$ the set of new irreducible fractions that are created in the Farey tree in the level $i$ and $\mathcal{L}_n=\cup_{i \in n} \ell_{i}$ the set of all irreducible fractions that are created in the Farey tree up to the level  $n$.  We will prove the theorem by induction over the levels of the Farey binary tree: \\
\noindent {\it Case} $n = 1$: 
If $G_0 = G_{0/1}$ or $G_0 = G_{1/1}$, then $\mathcal{R}(G_0) = G_0 \in \mathcal{G}$.\\
\noindent {\it Case} $n = 2$: 
$\mathcal{R}(G_{1/2} = [2,2,2]) = [1,1] =  G_0 \in \mathcal{G}$.\\
\noindent {\it Case} $n = 3$: 
$\mathcal{R}(G_{1/3} = [2,3,2,3]) = \mathcal{R}(G_0 \bigoplus (G_0 \bigoplus G_0)) =  [2,2,2] = G_{1/2}\in \mathcal{G}$ 
and $\mathcal{R}(G_{2/3} = [3,2,3,2]) = \mathcal{R}((G_0 \bigoplus G_0) \bigoplus G_0) =  [2,2,2] = G_{1/2}\in \mathcal{G}$.\\
\noindent {\it Inductive hypothesis:} Suppose that for all Haros graph $G = G_{p/q}$, where  $p/q \in \mathcal{L}_n$, then  $\mathcal{R}(G) \in \mathcal{G}$,  and that taking two adjacent Haros graphs $G, G'$, the images from the renormalization are adjacent Haros graphs as well. 

\noindent Let be a Haros graph $G = G_{p/q}$ where $p/q \in \ell_{n+1}$. By definition of Haros graphs, $G = G_L \bigoplus G_R$, where $G_L, G_R \in \mathcal{L}_n$. The merging node  what concatenate $G_L$ and $G_R$ has degree $k \geq 3$ by construction. Hence, the renormalization operator does not remove this node and clearly we have $\mathcal{R}(G) = \mathcal{R}( G_L \bigoplus G_R ) = \mathcal{R}( G_L ) \bigoplus \mathcal{R}( G_R )$. As $G_L$ and $G_R$ can be concatenated to obtain $G$, then $G_L$ and $G_R$ are adjacent Haros graph. By the inductive hypothesis, $\mathcal{R}( G_R )$ and $\mathcal{R}(G_L) $ are also adjacent Haros graph. Thus implies that the concatenation $\mathcal{R}( G_L ) \bigoplus \mathcal{R}( G_R )$ produces an Haros graph, i.e.,   $\mathcal{R}(G)\in \mathcal{G}$. This concludes the proof.
\end{proof}

According to the previous theorem, $\cal R$ is a map of the set of Haros graphs into itself and therefore  we say that $\cal G$ is renormalizable:  $\cal R$ can be applied iteratively to cover regions of the set of Haros graphs. Accordingly,  the iterations of $\cal R$ over $\cal G$ induces an RG flow. It remains to be understood, however, what is the specific analog of the `coupling constant space' in which such flow effectively develops, so as to explore the dynamics of $\cal R$ and its long-term dynamical behavior.  This is achieved in the next section.

\section{The Dynamics of \texorpdfstring{$\mathcal{R}$}{R1} induce a classification of real numbers}
\label{sec:dyn}

\subsection{Renormalization of Finite Haros}
We start this section by shedding some light into how $\cal R$ transforms finite Haros graphs.\\
As we have mentioned in section II, rational Haros graphs $G_{p/q}$ are finite graphs and, with $p/q \leq 1/2$, we know that $P(k = 2,p/q) = p/q, P(k = 3,p/q) = 1 - 2\cdot p/q$, and $P(k = 4,p/q) = 0$ \cite{HarosPaper}. In other words, the total number of nodes of $G_{p/q}$ is $q$ and the number of nodes with $k=2$ is just $p$. Hence, ${\cal R}(G_{p/q})$ is necessarily a Haros graph with $q-p$ nodes. Since indeed ${\cal R}(G_{p/q})$ is also Haros and is finite, then there exists $p'$ such that 
\begin{equation}
    {\cal R}(G_{p/q})=G_{p'/(q-p)}.
    \label{eq:scaling1}
\end{equation}
The same reasoning can be applied if $p/q \geq 1/2$, where from $P(k = 2,p/q) = (q-p)/q$, we get ${\cal R}(G_{p/q})=G_{p'/p}$ with $p'<p<q$. Thus, repeated iterations of $\cal R$ will smoothly decimate all degree-2 nodes, i.e. after a finite number $m$ of iterations, we necessarily reach $\mathcal{R}^{(m)}(G_{p/q}) = G_{1/1}={\cal R}(G_{1/1})$.  An illustration for such flow is presented in Fig.\ref{Fig_3}.\\
In other words, $G_{1/1}$ appears as a stable fixed point of the operator $\mathcal{R}$, whose basin of attraction consists of the set of all rational (i.e., finite) Haros graphs. Since a graph $G_{p/q}$ is bijectively related to the rational $p/q$, the iteration of $\cal R$ is also inducing an associated sequence of rational numbers.\\
First, observe that Eq.\ref{eq:scaling1} is suggesting a scaling relationship where the action of $\cal R$ is renormalizing the parameter $p/q\to p'/(q-p)$. If $p'$ was precisely controlled, then such eq. would be giving us the coupling constant flow of the RG action. Second, Eq. \ref{eq:scaling1} is an observation for rational `coupling constants': it is still unclear if such equation also holds for arbitrary irrational numbers. In the next subsection we address both issues.

\begin{figure*}[htbp!]
\includegraphics[width=1\textwidth]{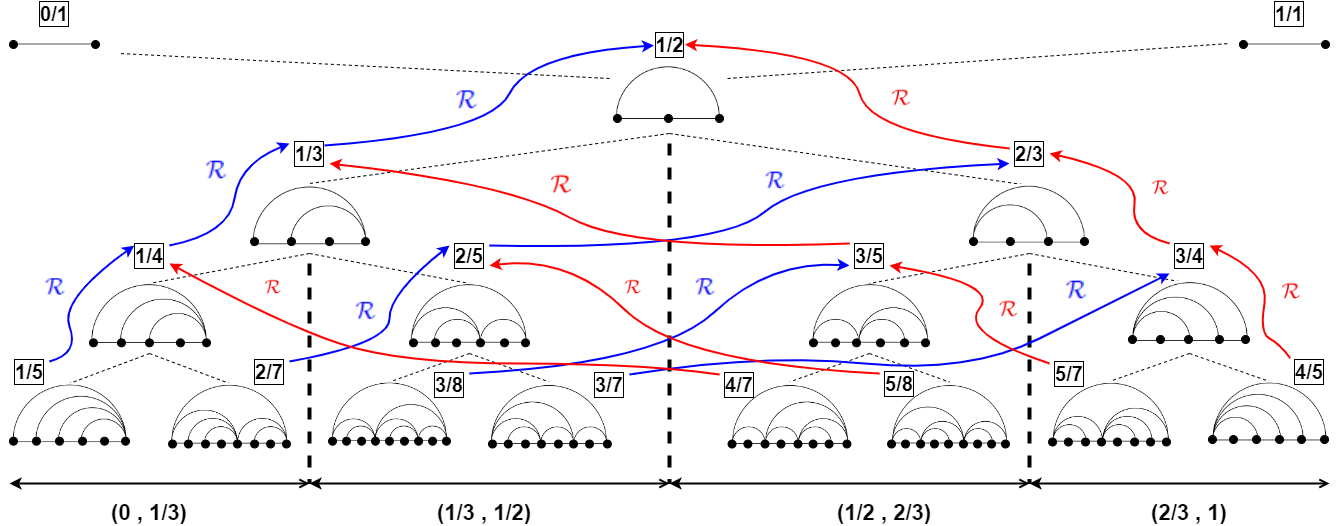}
\caption{Haros graph Tree with renormalization dynamics. The flow depicted in blue represent the left branch $T_1(x) = x/(1-x)$ displacing the intervals $[1/(n+1), 1/n]$ to $[1/n, 1/(n-1)]$ for $n \geq 2$. In the same manner, the right branch --depicted in red-- $T_2(x) = (2x-1)/x$ moving the intervals $[(n-2)/(n-1) ,(n-1)/n]$ for $n \geq 3$. In both cases, if $G$ is located at level $\kappa \geq 1$, then $\mathcal{R}(G)$ is located at level $\kappa - 1$.}
\label{Fig_3}
\end{figure*}

\subsection{A topological conjugacy for \texorpdfstring{$\mathcal{R}$}{R1}}
What kind of trajectory is generated under the iteration of $\cal R$ on $G_x$, when $x$ is irrational and our graph is infinite and thus seemingly unaccessible to numerical exploration?   A strategy is to find an operator $T(x)$ that acts on $[0,1]$, in such a way that the results of this operator acting on $x \in [0,1]$ can be related to ${\cal R}(G_x)$, thereby extending Eq.\ref{eq:scaling1} to arbitrary values $x$. We postulate that such operator is the following interval map $T$:
\begin{equation}
T(x)=\left\{
\begin{array}{ll}
T_{1}(x) = \frac{x}{1-x}  &  \textrm{if} \ x \leq 1/2 \\
T_{2}(x) = \frac{2x-1}{x} & \textrm{if} \ x > 1/2 \\
\end{array}%
\right.
\label{Eq_T}
\end{equation}
\noindent Formally, we are claiming that the maps $T$ and $\mathcal{R}$ are topologically conjugate. This means that there exists a homeomorphism $f$ which verifies $f \circ T = \mathcal{R} \circ f$. The homeomorphism that gives us this assertion is the application $\tau$, which identify number $p/q$ with the associated Haros graph $G_{p/q}$ i.e., is defined as $\tau(p/q) = G_{p/q}$. If such structure holds, we'd obtain the following commutative diagram:

\begin{center}
\begin{tikzcd}
\mathcal{G} \arrow{r}{\mathcal{R}}  & \mathcal{G}  \\
     \left[0,1\right] \arrow{r}{T} \arrow[swap]{u}{\tau} & \left[0,1\right]  \arrow{u}{\tau}
\end{tikzcd}
\end{center}
Below we state and prove that such is indeed the case.

\begin{theorem}
\label{Tma2}
$\forall x \in [0,1], \mathcal{R}(G_{x}) = G_{T(x)}$. In other words,  $T$ and $\mathcal{R}$ are maps topologically conjugate. 
\end{theorem}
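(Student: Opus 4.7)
The plan is to establish the topological conjugacy by proving the pointwise identity $\mathcal{R}(G_x) = G_{T(x)}$, first for rational $x$ by strong induction on the Farey binary tree level, and then extending to irrational $x$ via a limit argument. Theorem \ref{Tma1} is the main workhorse: it gives the distributive law $\mathcal{R}(G_L\oplus G_R) = \mathcal{R}(G_L)\oplus \mathcal{R}(G_R)$ for adjacent Haros graphs (from level 3 onwards, where the merging node has degree $\geq 3$), together with adjacency preservation, which is what makes the concatenation on the right-hand side yield a well-defined Haros graph that I can identify via its Farey label.

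For the base cases (Farey levels 1 and 2) I would just check directly: $\mathcal{R}(G_0)=G_0$ matches $T(0)=0$ and $T(1)=1$, while $\mathcal{R}(G_{1/2})=K_2$ matches $T_1(1/2)=1$, both sides collapsing to $G_0=K_2$. For the inductive step, let $p/q$ sit at level $n+1 \geq 3$ with $p/q = p_L/q_L \oplus p_R/q_R$ (Farey mediant of its parents in the binary tree). Applying Theorem \ref{Tma1} and the inductive hypothesis,
\begin{equation*}
\mathcal{R}(G_{p/q}) = G_{T(p_L/q_L)}\oplus G_{T(p_R/q_R)} = G_{T(p_L/q_L)\oplus T(p_R/q_R)},
\end{equation*}
so the induction reduces to the arithmetic identity $T(p/q)=T(p_L/q_L)\oplus T(p_R/q_R)$. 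The key structural fact I would use is that for any $p/q\neq 1/2$ at level $\geq 3$, both parents lie on the same side of $1/2$ as $p/q$ itself, so a single branch $T_i$ governs all three fractions; a direct computation then verifies that both $T_1(x)=x/(1-x)$ and $T_2(x)=(2x-1)/x$ intertwine with the Farey mediant, i.e.\ $T_i\bigl((a+c)/(b+d)\bigr) = T_i(a/b)\oplus T_i(c/d)$.

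To pass to irrational $x$, I would exploit the fact that $G_x$ is the inductive limit of the finite Haros graphs $G_{x_n}$ associated with the convergents along the binary path coding $x$ in the Farey tree. Because $\mathcal{R}$ is a purely local operation (keeping or removing each node based on its finite-degree environment), it commutes with this limit, and continuity of $T$ at every irrational then yields $\mathcal{R}(G_x)=\lim_n \mathcal{R}(G_{x_n}) = \lim_n G_{T(x_n)} = G_{T(x)}$.

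The main obstacle I anticipate is the bookkeeping around the boundary value $x=1/2$, where $T$ has a jump discontinuity and where the extremal graph $K_2$ can be identified ambiguously with either $G_{0/1}$ or $G_{1/1}$. This ambiguity is harmless at the abstract graph level, but to keep the induction airtight the adjacency-preservation clause of Theorem \ref{Tma1} must be used to pin down which label appears in each concatenation; concretely, $\mathcal{R}(G_{1/2})$ should be read as $G_{1/1}$ when it appears as a right parent below $1/2$ and as $G_{0/1}$ when it appears as a left parent above $1/2$, which is exactly what forces the correct branch of $T$ at the parents in the mediant identity above.
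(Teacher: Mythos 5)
Your proposal follows essentially the same route as the paper's proof: induction over the levels of the Farey binary tree, using Theorem \ref{Tma1} to write $\mathcal{R}(G_{p/q})=\mathcal{R}(G_{p_1/q_1})\oplus\mathcal{R}(G_{p_2/q_2})$ and then checking that each branch $T_1$, $T_2$ intertwines with the Farey mediant, exactly as in the paper's two-case computation. The only additions are cosmetic strengthenings the paper leaves implicit -- the explicit limit argument extending the identity from rationals to irrational $x$, and the careful labeling of $\mathcal{R}(G_{1/2})=K_2$ near the branch point $x=1/2$ -- so the argument is correct and matches the published one.
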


\begin{proof}
Applying induction over the levels of the Farey binary tree: \\
{\it Level} $n = 1$: 
$\mathcal{R}(G_{0/1}) = G_{0/1}$, and $T(\frac{0}{1}) = \frac{0}{1 - 0} = \frac{0}{1}$. The other special Haros graph verifies that
$\mathcal{R}(G_{1/1}) = G_{1/1}$ and $T(\frac{1}{1}) = 1 - \frac{1 - 1}{1} = \frac{1}{1}$. \\
{\it Level} $n = 2$: 
$\mathcal{R}(G_{1/2}) = G_{1/1}$, and $T(\frac{1}{2}) =\frac{1/2}{1 - 1/2} = \frac{1}{1}$. \\
{\it Level} $n = 3$: 
$\mathcal{R}(G_{1/3}) = G_{1/2}$ and $T(\frac{1}{3}) = \frac{1/3}{1 - 1/3} = \frac{1}{2}$. $\mathcal{R}(G_{2/3}) = G_{1/2}$ and $T(\frac{2}{3}) = 1 - \frac{1 - 2/3}{2/3} = \frac{1}{2}$. \\
{\it Inductive hypothesis:} The result is verified for all Haros graph $G = G_{p/q}$, where  $p/q \in \mathcal{L}_n$. 

\noindent Let be a Haros graph $G_{p/q}$ with $p/q \in \ell_{n+1}$. It is clear that $G =G_{p_1/q_1} \bigoplus G_{p_2/q_2}$, where $p_1/q_1, p_2/q_2 \in  \mathcal{L}_n$, and $\frac{p}{q} = \frac{p_1}{q_1} \oplus \frac{p_2}{q_2} = \frac{p_1 + p_2}{q_1 + q_2}$ . In virtue of Theorem \ref{Tma1}, we know that $\mathcal{R}(G_{p/q}) = \mathcal{R}( G_{p_1/q_1} ) \bigoplus \mathcal{R}( G_{p_2/q_2} )$. We distinguish two possibilities:
\begin{enumerate}
    \item If $G_{p/q}$ with $p/q \leq 1/2$, then  $p_1/q_1, p_2 /q_2 \leq 1/2$. Using the induction hypothesis: $$\mathcal{R}(G_{p_1/q_1}) = G_{T_1(p_1/q_1)},$$  $$\mathcal{R}(G_{p_2/q_2}) = G_{T_1(p_2/q_2)}.$$ The concatenation give us an Haros graph $\mathcal{R}(G_{p/q})$ associated to the number:
    \begin{eqnarray*}
    T_1\left(\frac{p_1}{q_1}\right) \oplus T_1\left(\frac{p_2}{q_2}\right) & = & \frac{p_1/q_1}{1 - p_1/q_1} \oplus \frac{p_2/q_2}{1 - p_2/q_2} = \frac{p_1}{q_1 - p_1} \oplus \frac{p_2}{q_2 - p_2} = \\
   & = & \frac{p_1 + p_2}{q_1 - p_1 + q_2 - p_2} = \frac{p}{q-p}=  T_1\left(\frac{p}{q}\right)
    \end{eqnarray*}

    \item If $G_{p/q}$ with $p/q \geq 1/2$, then  $p_1/q_1, p_2 /q_2 \geq 1/2$. Again, by the induction hypothesis: $$\mathcal{R}(G_{p_1/q_1}) = G_{T_2(p_1/q_1)},$$  $$\mathcal{R}(G_{p_2/q_2}) = G_{T_2(p_2/q_2)}.$$ In this case, the resulting Haros graph $\mathcal{R}(G_{p/q})$ is associated to the number:
    \begin{eqnarray*}
   T_2\left(\frac{p_1}{q_1}\right) \oplus T_2\left(\frac{p_2}{q_2}\right)& = & \frac{2\cdot p_1/q_1 -1 }{p_1/q_1} \oplus \frac{2\cdot p_2/q_2 - 1}{p_2/q_2} = \frac{2p_1 - q_1}{p_1} \oplus \frac{2p_2 - q_2}{p_2} = \\
   & = & \frac{2(p_1 + p_2) - (q_1 + q_2)}{p_1 + p_2} = \frac{2p - q}{p} =T_2\left(\frac{p}{q}\right).
    \end{eqnarray*}
\end{enumerate}
This concludes the proof.
\end{proof}
\noindent Two comments are in order:\\
First, this topological conjugacy is revealing the nature of the analog to coupling constants flow in a standard RG treatment. The iteration of $\cal R$ in Haros graph space is inducing a flow in $x\in[0,1]$, the coupling constant space. Such flow is, therefore, the RG flow.\\ 
Second, the explicit topological conjugacy that we found guarantees that the dynamical properties of $\cal R$ are induced by those of the map $T$, and therefore in order to investigate the dynamics of $\cal R$ over $\cal G$ one simply needs to consider the dynamical properties of $T$ over $[0,1]$, as topological conjugacy preserve the structure of the invariant set (i.e. fixed points map into fixed points, periodic orbits map into periodic orbits, etc): 
if for instance an element $x \in [0,1]$ is a fixed point of $T$, then $G_{x}$ is a fixed point of $\cal R$, and similarly an orbit of $x$ defined as $\left\lbrace T^n(x)\right\rbrace_{n = 0}^{\infty}$ corresponds with an orbit $\left\lbrace \mathcal{R}^n(G_x)\right\rbrace_{n = 0}^{\infty}$ in the Haros graph space, and so on.

\subsection{RG invariant set: A dynamical classification of the reals}

\noindent Let us start the exploration of $\cal R$ by giving some general comments about $T$. First, the algebraic operator $T$ has been defined as \textit{Farey modified map} \cite{IsolaDynamic, Saito, Bonnano_Isola}. It can be shown that $T$ is topologically conjugate to the well-known shift map $D(x)= 2x \ (\textrm{mod} 1)$ via the so-called Minkowski's question mark function $?(x)$, i.e. $(? \circ T)(x) = (D \circ ?)(x)$, (see Fig. \ref{Fig_4}) \cite{Bonnano_Isola}. Hence, $T$ is actually a expansive, orientation preserving, piece-wise analytic endomorphism with positive Lyapunov exponent (as the Lyapunov of $D(x)$ is $\log(2)$ \cite{Beirami}). As the shift map, $T$ is not invertible: for any $x$, $T(x)$ has two possible pre-images: 
$$\left\lbrace \frac{x}{1+x}, \frac{1}{2-x} \right\rbrace.$$
The pre-image $\frac{x}{1+x}$ coincides with the operator $F(x)$ defined in \cite{HarosPaper}, which acts in continued fraction as $x=[a_1, a_2, ...] \rightarrow F(x)=x/(1+x)=[a_1 + 1, a_2, ...]$. The continued fraction expression $x = [a_1, a_2, ...]$ codifies the symbolic path followed by the Haros graph, which is $L^{a_1} R^{a_2} L^{a_3} ... $ where $L^a$ means a sequence of $a$ Ls (in the rational case, the last symbol has an index $a_n -1$). Indeed, the action of the map $T$ over continued fraction has been presented in \cite{Saito}, where the author exposed that if $x = [a_1,a_2 ...]$: 

\begin{equation}
T([a_1, a_2, a_3, ...])=\left\{
\begin{array}{ll}
[a_1 - 1, a_2, ...] & , a_1 \geq 2  \\

[1, a_2 - 1, a_3, ...] & ,a_1 =1 \; \textit{ and } \; a_2 \geq 2 \\

[a_3 + 1, a_4, ...] & , a_1 = a_2 =1.
\end{array}%
\right.
\label{OpT_contfrac}
\end{equation} 

\begin{figure}[htbp]
\includegraphics[width=0.4\columnwidth]{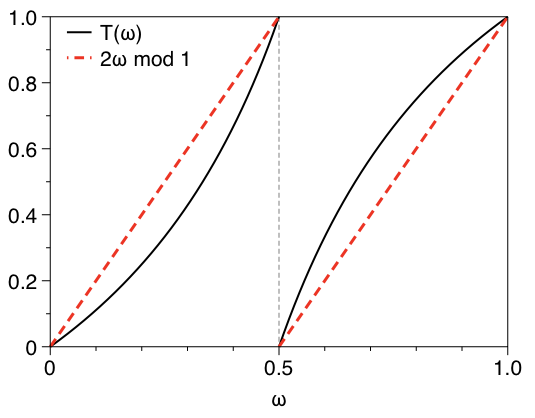}
\caption{Illustration of the map $T(w)$ and the Bernoulli shift $2w \mod 1$. These maps are topologically conjugate.}
\label{Fig_4}
\end{figure}

We now explore in detail the invariant set of $T$, and show how it induces a dynamical classification of real numbers in $[0,1]$ into families (and thus, an equivalent classification of the Haros graphs holds):
\begin{enumerate}
\item {\bf Fixed Points:} The study of fixed points --solving $T(x^*)=x^*$-- unveils that the only fixed points are $x^*=0,1$, related to $G_{0/1}$ and $G_{1/1}$ respectively. These are indifferent fixed points as $T'(x^*)=1$ ($x^*=0$ is a weakly repellent fixed point whereas $x^*=1$ is weakly attractive). Ergodic and spectral properties of $T$ have been studied \cite{Bonnano_Isola}.\\
Similarly to the shift map, $x=1$ is an attractor for all rational initial conditions (fixed point), i.e. $\forall x=p/q, \ \exists m \geq0: \ T^{(m)}(x)=1$, and equivalently, $ \mathcal{R}^{(m)}(G_{p/q})=G_{1/1}$. This is in good agreement to the analysis performed in Section IVA. Note that one can easily give a pictorial interpretation of this process in graph space (see Fig. \ref{Fig_3}): for any rational value, its associated Haros graph $G_{p/q}$ flows, under action of $\cal R$, upstream one level in the Haros graph Tree, eventually reaching the top of the Haros graph Tree $G_{1/1}$ which is a (stable) fixed point.

\item {\bf Periodic Orbits:} The so-called quadratic irrational numbers are those irrational values $x$ which are solutions of a quadratic equation with integer coefficients $a,b,c$:
$$ax^2 + bx +c =0.$$
Quadratic irrational numbers have continued fraction expansions that {\it eventually} repeat \cite{Khinchin}, of the form $[c_1, ... , c_r, \overline{a_1, ... , a_n}]$, where the terms $c_i$ are a finite transient and the terms $a_j$ the periodic part. That is to say, we can first distinguish the {\it pure} quadratic irrationals, of the form $x = [\overline{a_1, ... , a_r}]$. { According to Eq. \ref{OpT_contfrac}, as $T(x)$ acts as a shift operator over the continued fraction, these initial conditions generate periodic orbits under the action of $T(x)$ of period $m\geq 2$, i.e. $T^{(m)}(x) = x$}. By virtue of theorem 2, this invariant set of periodic orbits is the first exotic property of the RG flow, i.e. ${\cal R}^{(m)}(G_x) = G_x$.

\noindent Furthermore, it is easy to see that all these (unstable) periodic orbits of period $m>1$ are indeed formed {\it only} by pure quadratic irrationals. To justify this, one can simply decompose $T$ (see Eq. \ref{Eq_T}) into its two branches, $T_1(x), T_2(x)$, and observe that both branches are, actually, particular cases of a M\"obius transformation of the type
\begin{equation}
    M(x)=\frac{A+Bx}{C+Dx},
    \label{eq:Mobius}
\end{equation}
where $A\cdot D - B\cdot C \neq 0$. Now, the condition for $x$ to be a periodic point belonging to a cycle of length $m$ is $T^{(m)}(x)=x$. This equation admits at most $2^m$ different compositions of $T_1$ and $T_2$ of the type $T_i\circ T_i \circ \dots T_i$ where $T_i$ is either $T_1$ or $T_2$ (for a specific initial condition $x$, there is actually only one assignment). However the composition of two M\"obius transformations is again a M\"obius transformations, meaning that for an arbitrary $m$, solving the fixed point equation  $T^{(m)}(x)=x$ will always reduce to find the roots of a family of quadratic polynomials, whose solutions are by definition quadratic algebraic numbers, i.e. either rationals or quadratic irrationals. Since we know that all rationals converge under iteration to the attractive fixed point, then the only possible solutions of  $T^{(m)}(x)=x$ are quadratic irrationals. Additionally, changing $m$ essentially changes the coefficients of the M\"obius transformation $T^{(m)}(x)$, in such a way that this equation admits solutions $\forall m$: hence there exists periodic orbits of {\it arbitrarily} large period, and these consist of (and only of) pure quadratic irrationals.

In contrast, under the action of $T$, the {\it non pure} quadratic irrationals (those that have a transient in its continued fraction expansion) form trajectories which, after $\alpha = c_1 + ... + c_r$ applications of $T$, converge into the orbit generated by $[\overline{a_1, ... , a_r}]$ . In other words, each particular pure quadratic irrational is a stable attractor of the subset of non-pure quadratic irrationals with the same periodic part, i.e. the periodic orbit associated to a given set of pure quadratic irrationals is indeed a limit cycle.  
\\ Additionally, observe that the set of rational numbers is dense in $[0,1]$. This necessarily means that these periodic orbits are unstable, as any arbitrary small deviation from the (pure quadratic irrational) initial condition will collide with a rational number which, under the action of $T$ will converge to the fixed point, moving away from the neighborhood of the periodic orbit. Moving back to $\cal R$, we conclude that the RG flow in Haros graph space has an infinite number of periodic orbits which are at the same time stable (with basin of attraction the non-pure quadratic irrationals) and unstable (with repellor basin being the rational numbers). Since rationals and non-pure quadratic irrationals are intertwined in $[0,1]$, and indeed the cardinality of both sets is the same, periodic orbits are actually both stable and unstable with equal measure. \\

\noindent Also, we can also classify subfamilies of quadratic irrationals by the length of their orbit $m$.  Based on this property, it is natural to begin by considering the simplest class, i.e., pure quadratic irrationals whose continued fractions do not have a transient and have the shortest period of their periodic part, i.e. $n = 1$. These quadratic irrationals are solutions of the following equation
$$x^2 + bx-1=0, \; \; b\geq1$$
i.e. this is a subfamily of the wider class of quadratic polynomials where we fix $a=1, c=-1$, whose positive root $\phi_{b}^{-1}:=\frac{-b+\sqrt{b^2+4}}{2}$ is the fractional part of the {\it generalised Golden ratios}, or {\it metallic ratios} ($\phi_{1}^{-1}$ is the fractional part of the Golden ratio, whereas $\phi_{2}^{-1}$ ,the fractional part of $\sqrt{2}$, is the so-called \textit{Silver ratio}). The continued fraction expansion for this family adopts a simple shape $\phi_{b}^{-1}=[b,b,b, \dots ] := [\overline{b}]$. This implies that $\forall b\geq1,$ the quadratic irrational $\phi_{b}^{-1}$  forms periodic RG orbits of length $2b$, i.e.,  $T^{(2b)}(\phi_{b}^{-1})=\phi_{b}^{-1}$. The sequence of `coupling constants' $x$ along the periodic orbit is
$$\left\lbrace \phi_{b}^{-1}, T_1(\phi_{b}^{-1}), ,\dots,T_1^{(b-1)}(\phi_{b}^{-1}), T_2\circ T_1^{(b-1)}(\phi_{b}^{-1}), \dots, T_2^{(b-1)}\circ T_1^{(b-1)}(\phi_{b}^{-1}),T_1\circ T_2^{(b-1)}\circ T_1^{(b-1)}(\phi_{b}^{-1}) \right\rbrace ,$$
i.e. $(b-1)$ successive applications of $T_1$, then $(b-1)$ successive applications of $T_2$, and a final application of $T_1$ (the special case $b = 1$, the reciprocal of golden number, which orbit is $\left\lbrace \phi^{-1}, T_2(\phi^{-1}) = 1 - \phi^{-1} \right\rbrace$ . The precise values of this periodic orbit of period $2b$ is given by the following sequence of quadratic irrationals:
$$\phi_{b}^{-1},\bigg\lbrace \frac{\phi_{b}^{-1}}{1 - j\phi_{b}^{-1}} \bigg\rbrace_{j=1}^{j = b-1}, 1-\phi_{b}^{-1},\bigg\lbrace \frac{1 - (j+1)\phi_{b}^{-1}}{1 - j\phi_{b}^{-1}}\bigg\rbrace_{j=1}^{j =b-1}.$$

For this subfamily of pure quadratic irrationals, the associated family of non-pure quadratic irrationals is the so so-called \textit{generalized metallic ratios}, whose continued fraction defined as $[c_1, ... , c_r, \overline{b}]$. Starting from any generalized metallic ratio, the action of $\cal R$ will form a trajectory which ends in the orbit of the associated metallic ratio $\phi_{b}^{-1}$, i.e such periodic orbit is also a stable attractor for all the associated generalized metallic ratios, as previously discussed.

\noindent In conclusion, all quadratic irrational either belong to an unstable/stable periodic orbits, or form a trajectories converging onto them. 

\begin{figure}[htb]
\begin{center}
\includegraphics[width=1.1\textwidth, trim=2cm 0cm 0cm 0cm,clip=true]{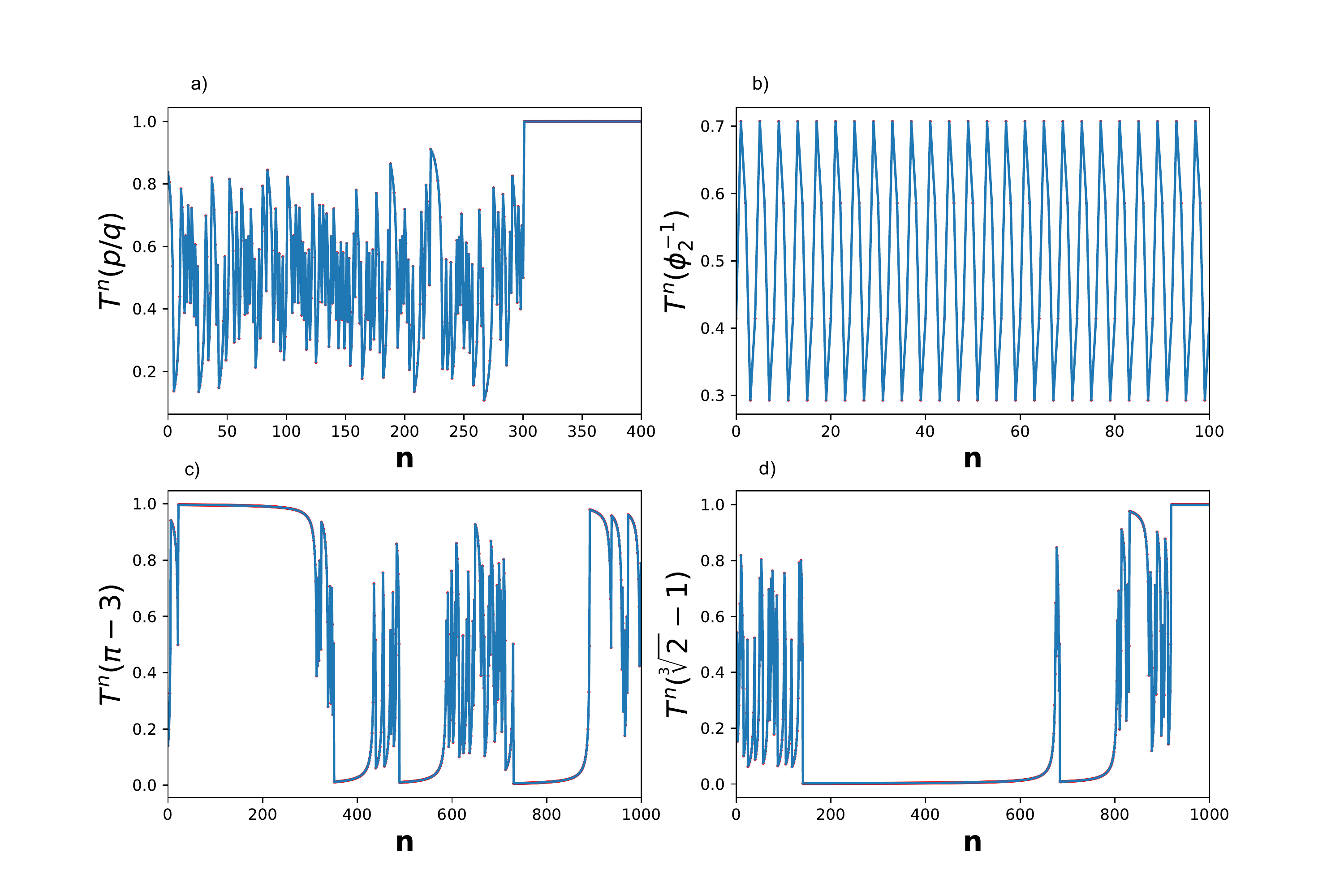}
\caption{Numeric simulations of algebraic operator $T$ for different real numbers using a fixed-point arithmetic with a precision of $1000$ significant places. \textbf{a)} Flow iterations for a rational number $p/q$ chosen by a random walk in the Farey binary tree. After a finite number of applications of $T$, the flow converges to $1/1$. \textbf{b)}  The flow derivated of the reciprocal of silver ratio $\phi_{2}^{-1} = \sqrt{2} - 1$ generates a periodic orbit of length $4$. \textbf{c)} and \textbf{d)} Taking as initial condition a trascendental number as the fractional part of $\pi$ or an algebraic number as $\sqrt[3]{2} - 1$, the flows are chaotic. In fact, $\pi - 3$ has been considered using the rational convergent of length $96$, whereas the $\sqrt[3]{2} - 1$ has been approximated by the rational convergent of length $65$. }
\label{Fig_X}
\end{center}
\end{figure}

\item {\bf Chaos}: All other initial conditions (e.g. non-quadratic irrational numbers) necessarily form aperiodic orbits (chaos). The cardinality of this set of initial conditions is indeed larger than the previous ones (rationals and quadratic irrationals). Indeed, by virtue of the topological conjugation between $T(x)$ and the shift map, $\cal R$ shows sensitive dependence on all but a subset of initial conditions of null Lebesgue measure --that subset being formed by the sets of all rationals and quadratic irrationals (both of these being countable)--. Let us investigate the structure of initial conditions that lead to chaotic behavior further.

\noindent First, consider {\bf transcendental} numbers (i.e., non algebraic irrationals) in $[0,1]$. Interestingly, if $x$ is a transcendental number, then the outcome of any algebraic function applied to $x$ is again transcendental. As the M\"obius transformation is indeed algebraic, then one can see that, besides the trivial case when the transformation is constant, initial conditions in the set of transcendental numbers in $[0,1]$ generate aperiodic orbits whose elements are all transcendental: we denote these orbits {\it transcendental  aperiodic orbits} (TAO) . Note however that, since the set of transcendental numbers is uncountable, there is in general no guarantee that any two transcendental numbers in $[0,1]$ will belong to the same TAO. By the numerability argument, it is easy to see that there actually exist an uncountable amount of different TAOs. Indeed, let us consider an arbitrary small neighbourhood of a transcendental number $x$, there are an uncountable amount of transcendental numbers, but the chaotic orbit of $x$, since it is countable, will only have at most a countable number of elements belonging to the neighbourhood of $x$. Thus, there are exist an uncountable number of points in that neighbourhood that do not belong to that orbit. 

\noindent Consider now the rest of {non-quadratic } irrational numbers, i.e. the set of {\bf non-quadratic algebraic} irrationals. Since the set of polynomials with rational constants is countable, trivially the set of non-quadratic algebraic irrationals  is also countable. Now, it is easy to prove that if $x$ is algebraic, then for a M\"obius transformation $M(x)$ (eq.\ref{eq:Mobius}) is also algebraic (a proof of this statement goes along the following line: if $x$ is algebraic then it is the root of an algebraic equation $P(x)=0$, for $P \in \mathbb{Q}[x]$. For $M(x)$ to be also algebraic, there needs to exist an algebraic equation $Q \in \mathbb{Q}[x]$ such that $Q(M(x))=0$. Trivially, $Q=P\circ M^{-1}$ (note that $M^{-1}$ always exist and is algebraic, and the composition of two algebraic functions is algebraic)). Moreover, the M\"obius transformation $M$ preserves the degree of algebraic number $x$. Indeed, let us suppose that $x$ is an algebraic number of degree $n$, i.e., $P(x) = a_n x^n + ... + a_1 x + a_0 = 0$, with $a_n \neq 0$, and let us define $Q(x)=(P\circ M^{-1})(x)$. Hence, $Q(M(x)) = (P\circ M^{-1})(M(x)) = P(x) = 0$. Then, as  
$$M^{-1}(x) = \frac{Dx -B}{-Cx + A},$$
the equation $Q(M(x)) = 0$ can be written as: 
\begin{eqnarray}
Q(M(x)) &=& a_n \left( \frac{Dx -B}{-Cx + A} \right)^n  + ... + a_1 \left( \frac{Dx -B}{-Cx + A} \right) + a_0 = 0 \Leftrightarrow \nonumber
\end{eqnarray}
\begin{eqnarray}
a_n(Dx -B)^n + ... + a_1(Dx -B)(-Cx + A)^{n-1} + a_0(-Cx + A)^n = 0 \nonumber
\end{eqnarray}
i.e., $Q$ is a polynomial of degree $n$. \noindent Accordingly, any initial condition which is a non-quadratic algebraic irrational of degree $n > 2$ will generate an aperiodic orbit whose elements are all non-quadratic algebraic irrationals with the same degree $n$: we denote these {\it algebraic aperiodic orbits} (AAOs). It is unclear whether a single AAO of degree $n$ is enough to recover the whole set of non-quadratic algebraic irrationals of degree $n$ in $[0,1]$ under iteration of $T$, or if there exist a countable union of different AAOs of degree $n$.\\
Be that as it may, we conclude that non-quadratic algebraic irrationals and transcendental numbers don't mix under the dynamics of $T$, so effectively $T$ splits out these two big families apart. By the topological conjugacy, the same is true for these families of Haros graphs: the RG flow $\cal R$ generates chaotic orbits of types TAO and AAO along the lines described above.
\end{enumerate}

For illustration, in Fig.\ref{Fig_X} we plot the iterations of $T$ for different initial conditions: a rational number (a), and numerical approximations of a pure quadratic irrational (b), a trascendental number (c) and a non-quadratic algebraic irrational (d). By topological conjugacy, these orbits are equivalent in the space of Haros graphs under the action of the RG operator $\cal R$.


\section{Entropy density towards and inside the invariant set of the RG flow}
\label{sec:ent}
One of the important questions stemming from the the structure of the RG flow is to explore how the variable coarse-graining induces an irreversible gradient along the RG action that could be observed in a monotonic function. This endeavour was crystallized in Quantum Field Theory (QFT) in the celebrated $c$-theorem for 2-dimensional QFTs \cite{zamolodchikov1986irreversibility} and its higher-dimensional version $a$-theorem \cite{cardy1988there}, where a specific function was shown to be monotonically decreasing along the RG flow, reaching a constant value at the fixed points. From a thermodynamic viewpoint, some authors have proposed a modified version of Boltzmann's $H$-theorem in the context of QFT \cite{gaite1996field}. Similarly, in the realm of random walk statistics \cite{Robledo} and dissipative chaotic dynamics \cite{QuasiperiodicGraphs,intermit,AnalyticalFeigenbaum, diaz2021logistic}, entropic gradients along the RG flow and extremization at the invariant set of RG flows has also been explored.\\
Much lesser analysis exists for the possible entropy change {\it inside} the invariant set of RG flows (attractor/repellor), if only because finding sufficiently complex RG flows with invariant sets beyond simple fixed points is in general difficult. A na\"ive and handwaving reading of the c-theorem, for instance, would suggest that at (and inside) the invariant set, entropy should remain constant. This principle is trivially fulfilled for RG flows with only a collection of stable and unstable fixed points, and checking such principle in more complex invariant sets has rarely been explored.\\
Our system, with a rich RG invariant set structure, allows to explore entropy gradients towards, away from, and inside invariant sets. In this section we therefore explore specific properties of the system as it evolves via the action of the RG operator $\cal R$. We consider the degree distribution $P(k;x)$ of the system --i.e. of the Haros graph $G_x$--, which computes
the probability that a node chosen randomly has degree $k$ in the Haros graph $G_x$. The degree distribution is the marginal distribution of the graph's degree sequence, and previous results suggest that such is a maximally informative property of the system \cite{canonical}. Interestingly, recent results allow us to have closed-form expressions for $P(k;x)$ when $x$ is rational \cite{DegreeHaros} or a generalized metallic ratio.  From the degree distribution, we subsequently compute its entropy
$$S(x) := - \sum_{k\geq 2} P(k;x)\cdot \log(P(k;x)).$$
Interpreting the degree $k$ as a random variable, $S(x)$ is the Shannon entropy of its distribution. 

The first three values $k =2,3,4$ of $P(k;x)$ are known for every real number, having $P(2;x) = x$, $P(3,x) = 1 - 2x$, $P(4;x) = 0$ if $x < 1/2$; and $P(2;x) = 1-x$, $P(3;x) = 2x - 1$, $P(4;x) = 0$ if $x > 1/2$. 
In left panel of Fig. \ref{Fig_5}, $S(x)$ has been depicted for all Haros graph $G_x$, where $x \in \mathcal{F}_{1000}$. A visual exploration of the figure discloses how the Haros graphs $G_{1/n}$ are local minima of $S(x)$. Furthermore, it has been demonstrated that the local maxima of $S(x)$ are precisely the set of quadratic irrationals (see \cite{HarosPaper} for proofs and an in-depth discussion of the properties of $S(x)$ for different families of numbers). The proof uses the Maxent technique  \cite{Jayne}, where the Lagrangian functional to optimize is 
\begin{equation*}
\mathcal{L}[\left\lbrace P(k,x) \right\rbrace]=-\sum_{k=5}^{\infty}{P(k,x)\log{P(k,x)}}-(\lambda_0-1)\left(\sum_{k=5}^{\infty}{P(k,x)}-{\cal Q}_0 \right)
-\lambda_1\left(\sum_{k=5}^{\infty}{kP(k,x)}- {\cal Q}_1 \right),
\end{equation*}
being the constrains (for $x < 1/2$), the normalisation of $P(k,x)$: $${\cal Q}_0 = \sum_{k=5}^{\infty} P(k,x) = x$$ and the constant value for arithmetic mean of degree sequence $\langle k \rangle = 4$: $${\cal Q}_1 = \sum_{k=5}^{\infty} k\cdot P(k,x) = 4x + 1.$$
Setting other values $P(k,x) = 0$ as well, the optimum value is associated to different quadratic irrationals.\\  

Let us study how the degree distribution changes along the RG periodic orbit. As we previously mentioned, the metallic ratios $\phi_{b}^{-1} = [\overline{b}]$, with $b \geq 1$, form periodic RG orbits of length $2b$. 
First, metallic ratios have the following degree distribution for $b\geq 2$:
\begin{equation}
P(k,\phi_{b}^{-1})=\left\{
\begin{array}{ll}
\phi_{b}^{-1}  & , k=2 \\
1-2\cdot \phi_{b}^{-1}& , k=3 \\
(1 - \phi_{b}^{-1})\cdot (\phi_{b}^{-1})^{\frac{k - 3}{b}}   & , k=bn + 3, n \in \mathbb{N} \\
0 &  , \text{otherwise.}%
\end{array}%
\right.
\label{Metb_eq}
\end{equation} 
Second, to unveil the degree distribution of the Haros graphs belonging to the orbit $T^{j}(\phi_{b}^{-1})$, we first need point out that it is easy obtain the following scaling equation for degrees $k \geq 5$:
\begin{equation}
P(k+1,x) =\left\{
\begin{array}{ll}
(1 - x)\cdot P\left(k, T_1(x) \right) & , \textit{ if } x \leq 1/2, \\
x \cdot P\left(k, T_2(x) \right) & , \textit{ if } x \geq 1/2.
\end{array}%
\right.
\label{Pscaling1}
\end{equation}
To verify the validity of Eq. \ref{Pscaling1} above, we show first that $T_{1}^{-1}(x) = \frac{x}{x+1}$ if $x < 1/2$ and $T_{2}^{-1}(x) = \frac{1}{2-x}$ if $x > 1/2$. Secondly, we use a result from \cite{HarosPaper} (see Eq. 6 in \cite{HarosPaper}) where, $\forall x \in [0,1]$:
\begin{equation}
P(k,x) =\left\{
\begin{array}{l}
(1 + x)\cdot P\left(k+1, \frac{x}{1 + x}\right)  \\
(2-x)\cdot P\left(k+1, \frac{1}{2-x}\right).
\end{array}%
\right.
\label{Pscaling}
\end{equation}
Combining Eq. \ref{Metb_eq} and Eq. \ref{Pscaling1},  and with a bit of algebra, the degree distribution of $P(k, T_{1}^{(j)}(\phi_{b}^{-1}))$ for values $j = 0, ... , b-2$ results:
\begin{equation}
P(k,T_{1}^{(j)}(\phi_{b}^{-1}))=\left\{
\begin{array}{ll}
T_{1}^{(j)}(\phi_{b}^{-1}) & , k=2 \\
1-2\cdot T_{1}^{(j)}(\phi_{b}^{-1}) & , k=3 \\
\frac{1 - \phi_{b}^{-1}}{1 - j \cdot \phi_{b}^{-1}}\cdot (\phi_{b}^{-1})^{\frac{k - 3 + j}{b}}   & , k=bn + 3 - j, n \in \mathbb{N} \\
0 &  , \text{otherwise.}%
\end{array}%
\right.
\label{TMetb_eq}
\end{equation} 
It is easy to see that $T_{1}^{(b-1)}(\phi_{b}^{-1}) > 1/2$. Hence, the Haros graph associated to this number has a degree distribution:

\begin{equation}
P(k,T_{1}^{(b-1)}(\phi_{b}^{-1}))=\left\{
\begin{array}{ll}
1 -T_{1}^{(b-1)}(\phi_{b}^{-1}) & , k=2 \\
2\cdot T_{1}^{(b-1)}(\phi_{b}^{-1}) - 1& , k=3 \\
\frac{1 - \phi_{b}^{-1}}{1 - (b-1) \cdot \phi_{b}^{-1}}\cdot (\phi_{b}^{-1})^{\frac{k - 4 + b}{b}}   & , k=bn + 4, n \in \mathbb{N} \\
0 &  , \text{otherwise}.
\end{array}%
\right.
\label{Tb1Metb_eq}
\end{equation} 
The following element of the orbit is $T_2 \circ T_{1}^{(b-1)}(\phi_{b}^{-1}) = 1 - \phi_{b}^{-1}$, and by symmetry, we have $P(k,x) = P(k, 1-x)$. Therefore, the remaining $b$ elements of the orbit can be calculated via the second scaling of Eq. \ref{Pscaling1}, or using the symmetry. This concludes the study of the degree distribution of Haros graphs along the periodic orbit.  \\


Let us then consider the graph entropy $S(x)$ illustrated in Fig. \ref{Fig_5}, with a clear  self-similar structure in the subintervals $[1/n, 1/(n+1)]$ for $n \geq 2$. In order to assign null entropy to Haros graph $G_{1/n}$, we first define a reduced entropy $H(x)$ as:
\begin{equation}
H(x) \equiv\left\{
\begin{array}{ll}
S(x) + 2x\cdot \log(x) + (1-2x)\cdot \log(1-2x)  & , \; x \in [0,1/2] \\
S(x) + 2(1-x)\cdot \log(1-x) + (2x-1)\cdot \log(2x-1)  & , \; x \in [1/2,1].
\label{def:EntH}
\end{array}%
\right.
\end{equation}
If we denote:
\begin{equation}
\Lambda(x) = \frac{x\log x -  (1-2x)\log(1-2x)  -  (3x-1)\log(3x-1)}{x},
\label{Lambda}
\end{equation}
then we can prove that the reduced entropy verifies the following scaling equations (a proof can be found in Appendix \ref{sec:AppendixA}):
\begin{equation}
H(x)=\left\{
\begin{array}{ll}
(1-x)\cdot H(T_1(x)) & , \; x \in [0,1/3] \\
(1-x)\cdot H(T_1(x)) + x\cdot \Lambda(x) & , \; x \in [1/3,1/2] \\
x\cdot H(T_2(x)) + (1-x)\cdot \Lambda(1-x) & , \; x \in [1/2,2/3] \\
x\cdot H(T_2(x)) & , \; x \in [2/3,1].
\end{array}%
\right.
\label{Hscaling1}
\end{equation}
Moreover, we now define the entropy density as:
\begin{equation}
h(x)=\left\{
\begin{array}{ll}
\frac{H(x)}{x} & , \; x \in [0,1/2] \\
\frac{H(x)}{1-x} & , \; x \in [1/2,1].
\end{array}%
\right.
\label{Hdensity}
\end{equation}

\begin{figure}[h]
\begin{center}
\includegraphics[width=1.1\textwidth, trim=2cm 0cm 0cm 0cm,clip=true]{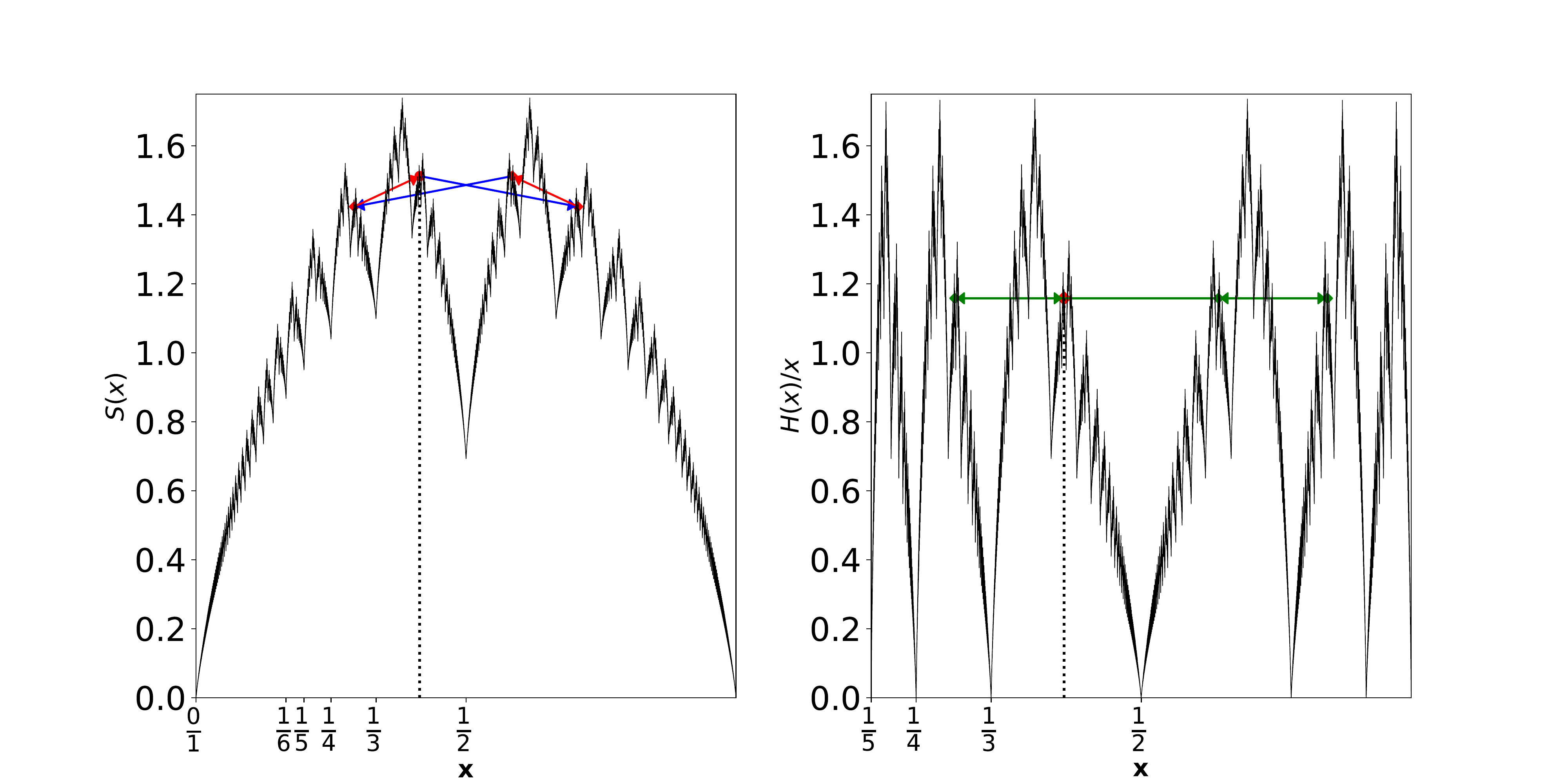}
\caption{Entropy graph function $S(x)$ (left panel) and reduced density entropy graph function $h(x) = H(x)/x$ defined in Eq. \ref{Hdensity} (right panel), computed for all Haros graphs $G_x$ with $x \in {\cal F}_{1000}$. The silver ratio $\phi_2 = [\overline{2}]$ (pointed with dotted vertical line) generates a periodic orbit of length $m = 4$. The entropy of  the elements of the orbit fluctuates in $S(x)$, whereas is constant in $h(x)$.}
\label{Fig_5}
\end{center}
\end{figure}

Hence, applying this definition to \ref{Hscaling1},we obtain the scaling equations:
\begin{equation}
h(T_1(x))=\left\{
\begin{array}{ll}
h(x) & , \; x \in [0,1/3] \\
\frac{x}{1-2x} \cdot(h(x) - \Lambda(x))&, \; x \in [1/3,1/2]
\end{array}%
\right.
\label{HDscaling1}
\end{equation}
\begin{equation}
h(T_2(x))=\left\{
\begin{array}{ll}
\frac{1-x}{2x-1} \cdot(h(x) - \Lambda(1-x)) & , \; x \in [1/3,1/2] \\
h(x) & , \; x \in [2/3,1]
\end{array}%
\right.
\label{HDscaling2}
\end{equation}



\noindent Equipped with these equations, we can now outline several important consequences.\\
First, if $x$ and $T(x)$ are both in $[0, 1/2]$ or $[1/2, 1]$, the renormalization operator $\mathcal{R}$ preserves the entropy density $h$. This means that if the renormalization trajectory for an initial rational value remains in one or the other half side of the interval $[0,1]$, such flow has constant entropy density. This situation, however, only happens for the rationals $p/q=1/n$ and $p/q=(n-1)/n$, for other rationals, the trajectory hops, and therefore the entropy density flow fluctuates, and Eq. \ref{HDscaling1} and Eq. \ref{HDscaling2} allows us control such fluctuation. In panel (a) of Fig.\ref{fig:T_h_example} we illustrate the iteration $T^{(n)}(x)$ (black dots) and the entropy density $h(T^{(n)}(x))$ (blue hollow points) for a specific example, where we can see that abovementioned signatures.\\
The question is then, what is the precise shape of such entropy density function along the RG flow, both towards/away from the RG invariant set and, interestingly, inside the RG set?\\

\begin{figure}[htb]
\begin{center}
\includegraphics[width=0.47\textwidth]{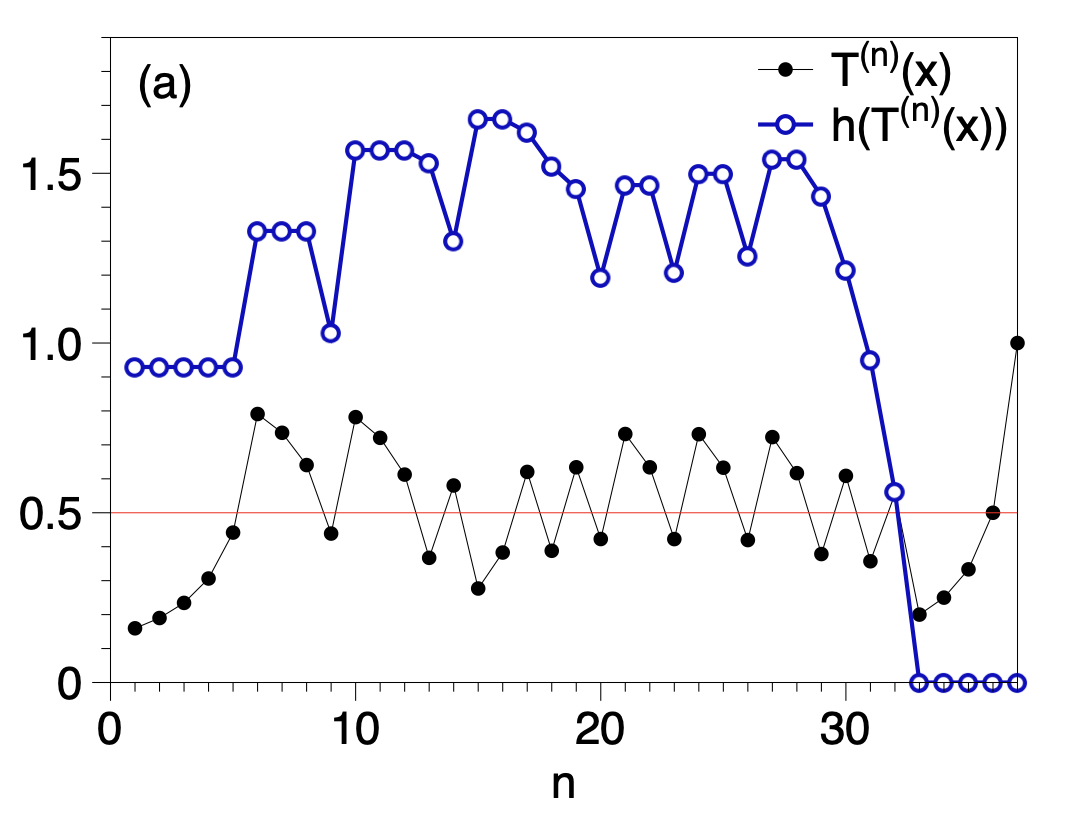}
\includegraphics[width=0.47\textwidth]{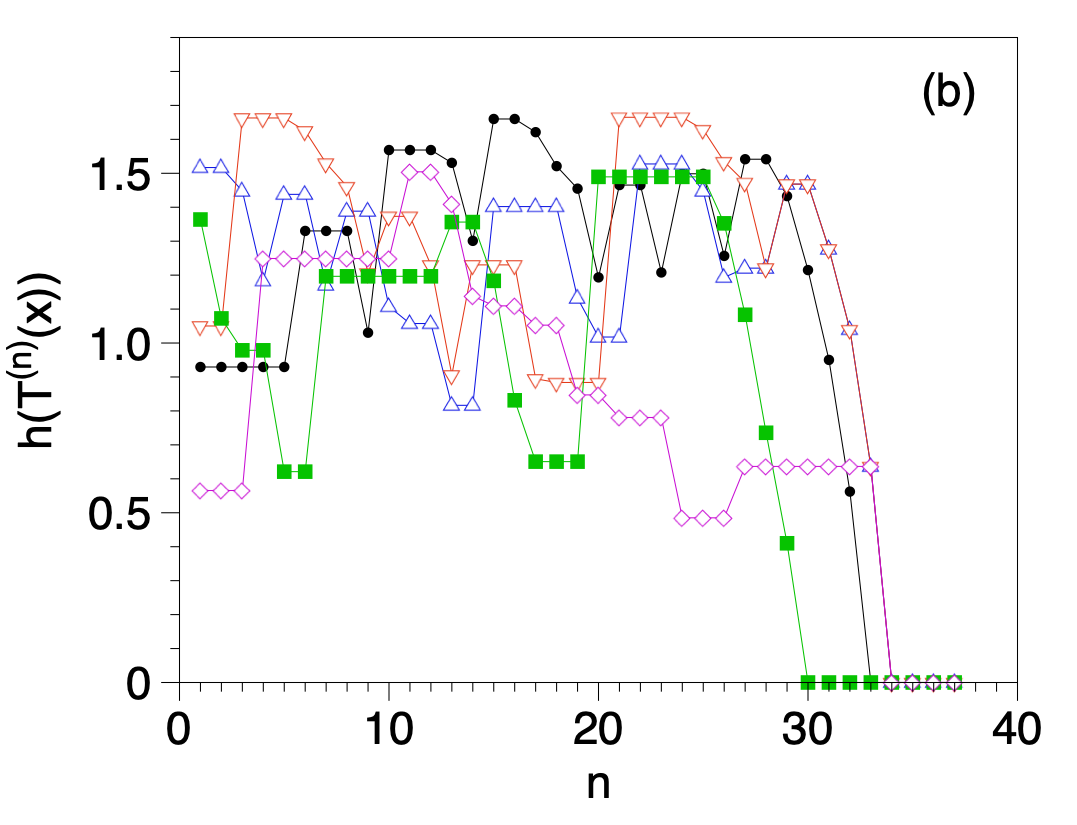}
\caption{(Panel a) $T^{(n)}(x)$ (black dots) and the entropy density $h(T^{(n)}(x))$ (blue hollow points) as a function of the iteration $n$, for a concrete example of a rational $x$. We can see that the entropy density remains constant when the iteration of $T$ leaves the point in the same side of the interval $[0,1]$, and changes when the side is changed. See the text for details. (Panel b) Entropy density $T^{(n)}(x)$ as a function of $n$ for five different rationals, generated randomly (see the text).}
\label{fig:T_h_example}
\end{center}
\end{figure}

\noindent {\bf Convergence to the stable fixed point. } Since rational Haros graphs renormalize into  $G_{1/1}$ --which has vanishing entropy density-- after a finite number of steps, $\forall x=p/q, \exists r_0 \in \mathbb{N}$ such that $\forall r \geq r_0$:
\begin{equation}
h(T^{(r)}(p/q)) = 0
\end{equation}
(it can be check that if $p/q = [a_1, ... , a_n]$, then $r_0$ is obtained as  $\sum_{i=1}^{n} a_i$). In other words, entropy density flows towards its minimum along the flow towards the stable fixed point of the RG flow. This could seem, in principle, to be in agreement with a $c$-like theorem in this context. Note, however, that such convergence is far from smooth. Much on the contrary, depending on the initial condition $G_{p/q}$, the flow towards $G_{1/1}$ will show fluctuations in the entropy density with local increments and decrements. An illustration of such behavior for different initial values $x=p/q$ is shown in panel (b) of Fig.\ref{fig:T_h_example}. \\

\noindent {\bf Entropy-density is constant inside (and only inside?) periodic orbits. } It can be proved (see Appendix \ref{sec:AppendixB}) that for the family of metallic ratios $\phi_{b}^{-1}$, the cumbersome entropy change showed in Eq. \ref{HDscaling1} and Eq. \ref{HDscaling2} is simplified, obtaining a remarkable entropy-constant result inside each periodic orbit:
\begin{equation}
  h(T^{(n)}(\phi_{b}^{-1}))= h(\phi_{b}^{-1}), \ \forall n
  \label{eq:constant}
\end{equation}
In the right panel of Fig \ref{Fig_5} we illustrate this behavior for the specific case of the silver ratio $\phi_{2}^{-1}$. We {conjecture} that such constancy is only possible inside these orbits, and for any other type of irrational, the entropy density necessarily changes along the RG flow (a rigorous proof for this observation remains elusive and is an open problem).\\


\noindent {\bf Entropy density along aperiodic orbits. } The behaviour of entropy density gradient along the RG flow of non-quadratic algebraic and trascendental numbers (for which the RG flow is aperiodic) has not been elucidated and remains an open problem. According to the conjecture above, we conjecture that such entropy density is, at least, non constant.

\section{Conclusions}
\label{sec:con}

In this paper we have thoroughly studied the iteration dynamics of the graph-theoretic operator $\cal R$ over the set of Haros graphs $\cal G$, a system which is in bijection with the reals $\mathbb{R}$. We have shown that $\cal R$ performs a renormalization of Haros graphs and that $\cal G$ is indeed renormalizable. We proved that the iterated dynamics of $\cal R$ over Haros graphs (which we call the RG flow) is chaotic, as its dynamics is topologically conjugate to an interval chaotic map which is itself conjugate to the shift map via Minkowski's question mark function. We explicitly found the conjugacy between $\cal R$ and the chaotic interval map and thus are able to analytically study the chaotic RG flow in the `coupling constant space', as it is routinely done in standard RG treatments of physical systems. Incidentally, the fact that the RG flow of Haros graphs is conjugate to the shift map suggests some possible connection to spin systems with complex coupling constants, which were recently  shown to possess a chaotic (Bernoulli) RG structure \cite{bosschaert2022chaotic}.\\
The RG invariant set consists of an extremely rich hierarchy of periodic orbits and non-mixing aperiodic orbits, and we showed that this hierarchy provide a dynamical classification of different families of the reals, including quadratic irrationals, non-quadratic algebraic irrationals and trascendental numbers.\\
We finally studied an entropic density function and how this function evolved along the RG flow, i.e. towards the stable fixed point and inside the invariant set. The stable fixed point of the RG flow has minimum entropy density, whereas the periodic orbits are local entropic maxima. These results provide a further verification of the apparent extremal entropy properties of RG attractors and repellors \cite{Robledo}. Since periodic orbits are unstable with respect to the  measure of rational numbers --i.e. the RG flow connects the periodic orbit and the stable fixed point for perturbations close to the orbit--, then the entropy density is necessarily {\it globally} decreasing. However, we haven't found that such function is locally monotonic: along its way towards the stable fixed point, the trajectory can show locally both increases and decreases of entropy density.  
Finally, we proved that entropy density is (nontrivially) constant inside periodic orbits of the RG. These results can be put in comparison with RG monotonicity theorems in quantum field theories, which are folklorically believed to preclude the existence of complex RG flows (see, however, \cite{curtright2012renormalization}). In that sense, we argue that observing constant entropy density inside the periodic orbit is in perfect agreement with monotonicity theorems, that actually require that the coupling constant flow has flat derivative at the invariant set.\\
Finally, we wish to stress that the system under study is purely formal, after all it is a graph-theoretic set $\cal G$ in bijection with $\mathbb{R}$ \cite{HarosPaper}. In this sense, it is difficult to provide a direct physical interpretation of its chaotic RG flow, where physical chaotic RG flows speak of an intrinsically multiscale system, with broken scale-invariance. To further shed light on such interpretation, it is paramount to investigate the possible statistical mechanic structure underlying $\cal G$, and this will be the topic of further research. Overall, this work further highlights the rich structure of Haros graphs and the unexpected links between number theory, graph theory and theoretical physics \footnote{\url{https://empslocal.ex.ac.uk/people/staff/mrwatkin/zeta/surprising.htm}}.  

\section*{Appendix}
\subsection{Proof of Equation \ref{Hscaling1}}
\label{sec:AppendixA}
\begin{proof}
If $x < 1/3$, thus it is clear that $T_1(x) < 1/2$ and $P(5,x) = 0$:
\begin{eqnarray*}
H(x) &  = & -\sum_{k \geq 5} P(k,x)\log(P(k,x)) + x\cdot \log x \\
& = & -\sum_{k \geq 5} (1-x)P(k-1,T_1(x))\log((1-x)P(k-1,T_1(x))) + x\cdot \log x  \\
& = & (1-x)\left(-\sum_{k \geq 4}P(k,T_1(x))\log(1-x) -\sum_{k \geq 4}P(k,T_1(x))\log(P(k,T_1(x))) + \frac{x}{1-x}\cdot \log x \right)\\
& = & (1-x)\left(-\frac{x}{1 - x}\log(1-x) -\sum_{k \geq 4}P(k,T_1(x))\log(P(k,T_1(x))) + \frac{x}{1-x}\cdot \log x \right) \\
& = & (1-x)\left(-\sum_{k\geq 5}P(k,T_1(x))\log(P(k,T_1(x)))  + \frac{x}{1 - x} \cdot \log \frac{x}{1 - x} \right)\\
& = & (1-x)\left(-\sum_{k \geq 5}P(k,T_1(x))\log(P(k,T_1(x)))  + T_1(x) \cdot \log T_1(x)  \right)\\
& =&  (1-x)\cdot H(T_1(x)).
\end{eqnarray*} 
If we take $1/3 < x < 1/2$, then $T_1(x) > 1/2$ and $P(5,x) = 3x - 1$. We have the following equation:
\begin{eqnarray*}
H(x) &  = & -\sum_{k \geq 5} P(k,x)\log(P(k,x)) + x\cdot \log x \\
& = & -\sum_{k \geq 6} (1-x)P(k-1,T_1(x))\log((1-x)P(k-1,T_1(x))) - P(5,x) \cdot \log P(5,x)  + x\cdot \log x  \\
& = & (1-x)\left(-\sum_{k \geq 5}P(k,T_1(x))\log(1-x) -\sum_{k \geq 5}P(k,T_1(x))\log(P(k,T_1(x)))  \right) \\
& + & x\cdot \log x - P(5,x) \cdot \log P(5,x)\\
& = & (1-x)\left(-(1-T_1(x)) \cdot \log(1-x) -\sum_{k \geq 5}P(k,T_1(x))\log(P(k,T_1(x))) \right) \\
& + & x\cdot \log x - P(5,x) \cdot \log P(5,x)\\
& = & (1-x)\left[ -(1-T_1(x)) \log(1-x) -\sum_{k \geq 5}P(k,T_1(x))\log(P(k,T_1(x))) + \right. \\
& & \Biggl. (1-T_1(x))\log(1-T_1(x)) -(1-T_1(x))\log(1-T_1(x))   \Biggr] + x\cdot \log x - P(5,x) \cdot \log P(5,x)\\
& = & (1-x)\cdot \Biggl[ -(1-T_1(x)) \log(1-x) + H(T_1(x)) -(1-T_1(x))\log(1-T_1(x)) \Biggr] \\
&+& x\cdot \log x - P(5,x) \cdot \log P(5,x)\\
&&\text{\dots Applying $1-T_1(x) = (1-2x)/(1-x)$} \dots\\
& = & (1-x)H(T_1(x)) -(1-2x) \log(1-x) -(1-2x) \log(1-2x) + (1-2x) \log(1-x) \\
&+& x\cdot \log x - P(5,x) \cdot \log P(5,x)\\
&=& (1-x)H(T_1(x)) +P(2,x) \cdot \log P(2,x) - P(3,x) \cdot \log P(3,x) - P(5,x) \cdot \log P(5,x)
\end{eqnarray*} 
Analogously, if $x > 2/3$, thus it is clear that $T_2(x) > 1/2$ and $P(5,x) = 0$:
\begin{eqnarray*}
H(x) &  = & -\sum_{k \geq 5} P(k,x)\log(P(k,x)) + (1-x)\cdot \log (1-x) \\
& = & -\sum_{k \geq 5} x\cdot P(k-1,T_2(x))\log(x\cdot P(k-1,T_2(x))) +  (1-x)\cdot \log (1-x)  \\
& = & x \cdot \left(-\sum_{k \geq 4}P(k,T_2(x))\log(x) -\sum_{k \geq 4}P(k,T_2(x))\log(P(k,T_2(x))) + \frac{1-x}{x}\cdot \log(1-x) \right)\\
& = & x \cdot\left(- \left( 1 - \frac{2x-1}{x}\right) \log(x) -\sum_{k \geq 4}P(k,T_2(x))\log(P(k,T_2(x))) + \frac{1-x}{x}\cdot \log(1-x) \right) \\
& = & x \cdot \left(-\sum_{k\geq 5}P(k,T_2(x))\log(P(k,T_2(x)))  + \frac{1-x}{x} \cdot \log \frac{1-x}{x} \right)\\
& = & x \cdot \left(-\sum_{k \geq 5}P(k,T_2(x))\log(P(k,T_2(x)))  + T_2(x) \cdot \log T_2(x)  \right)\\
& =&  x \cdot H(T_2(x)).
\end{eqnarray*} 
If we take $1/2 < x < 2/3$, then $T_2(x) < 1/2$ and $P(5,x) = 2 - 3x$. We conclude the proof with the following equation:
\begin{eqnarray*}
H(x) &  = & -\sum_{k \geq 5} P(k,x)\log(P(k,x)) + (1-x)\cdot \log (1-x) \\
& = & -\sum_{k \geq 6} x \cdot P(k-1,T_2(x))\log(x \cdot P(k-1,T_2(x))) - P(5,x) \cdot \log P(5,x)  + (1-x)\cdot \log (1-x)  \\
& = & x \cdot \left(-\sum_{k \geq 5}P(k,T_2(x))\log(x) -\sum_{k \geq 5}P(k,T_2(x))\log(P(k,T_2(x)))  \right) \\
& + & (1-x)\cdot \log (1-x) - P(5,x) \cdot \log P(5,x)\\
& = & x \cdot \left(-T_2(x) \cdot \log(x) -\sum_{k \geq 5}P(k,T_2(x))\log(P(k,T_2(x))) \right) \\
& + & (1-x)\cdot \log (1-x) - P(5,x) \cdot \log P(5,x) \\
& = & x \cdot \left[ -T_2(x) \cdot \log(x) -\sum_{k \geq 5}P(k,T_2(x))\log(P(k,T_2(x))) + \right. \\
& & \Biggl. T_2(x) \cdot \log(T_2(x)) -T_2(x) \cdot \log(T_2(x))   \Biggr] + (1-x)\cdot \log (1-x) - P(5,x) \cdot \log P(5,x)\\
& = & x \cdot \Biggl[ -T_2(x) \cdot \log(x) + H(T_2(x)) -T_2(x) \cdot \log(T_2(x))  \Biggr] \\
&+& (1-x) \cdot \log (1-x) - P(5,x) \cdot \log P(5,x)\\
& = & x \cdot H(T_2(x)) -(2x-1) \log(x) -(2x-1) \log(2x-1) + (2x-1) \log(x) \\
&+& (1-x) \cdot \log (1-x) - P(5,x) \cdot \log P(5,x)\\
&=& x\cdot H(T_2(x)) +P(2,x) \cdot \log P(2,x) - P(3,x) \cdot \log P(3,x) - P(5,x) \cdot \log P(5,x)
\end{eqnarray*} 
\end{proof}

\subsection{Proof of constant entropy density in the metallic ratio orbits}
\label{sec:AppendixB}
{\it Proof.} Considering the metallic ratios $\phi_{b}^{-1}= [0;\overline{b} ]$, it is clear that $T_{1}^{(i)}(\phi_{b}^{-1}) < 1/2$ for $i = 1, ... , b-2$ and $T_{1}^{(b-1)}(\phi_{b}^{-1}) > 1/2$. Moreover, as  $\phi_{b}^{-1}$ is the solution of the quadratic equation $x^2 +bx - 1 = 0$, then $\phi_{b}^{-1}$ verifies the following relationships:
\begin{equation}
    1 - b\phi_{b}^{-1} =\phi_{b}^{-2}
\label{met1}
\end{equation}
\begin{equation}
    (b+1)\phi_{b}^{-1}- 1 = \phi_{b}^{-1} - \phi_{b}^{-2}.
\label{met2}
\end{equation}
Hence, if we define $\Lambda(x)$ (see Eq. \ref{Lambda}) and $\beta := T_{1}^{(b-2)}(\phi_{b}^{-1}) = \frac{\phi_{b}^{-1}}{1-(b-2)\phi_{b}^{-1}}$, we have:
\begin{eqnarray*}
\Lambda(\beta)  &  = & \Biggl[  \frac{\phi_{b}^{-1}}{1-(b-2)\phi_{b}^{-1}} \cdot \log \left( \frac{\phi_{b}^{-1}}{1-(b-2)\phi_{b}^{-1}} \right) -  \frac{1 -b\phi_{b}^{-1}}{1-(b-2)\phi_{b}^{-1}} \cdot \log \left( \frac{1-b\phi_{b}^{-1}}{1-(b-2)\phi_{b}^{-1}} \right) - \Biggr.  \\
& - & \Biggl. \frac{(b+1)\phi_{b}^{-1}-1}{1-(b-2)\phi_{b}^{-1}} \cdot \log \left( \frac{(b+1)\phi_{b}^{-1}-1}{1-(b-2)\phi_{b}^{-1}} \right) \Biggr] \cdot \frac{1-(b-2)\phi_{b}^{-1}}{\phi_{b}^{-1}} \\
&&\textit{(Using Eq. \ref{met1} and \ref{met2})} \\
& = & \log(\phi_{b}^{-1}) - \log(1 -(b-2)\phi_{b}^{-1}) - \phi_{b}^{-1}\log(\phi_{b}^{-2}) + \phi_{b}^{-1} \log(1-(b-2)\phi_{b}^{-1}) \\
& - & (1-\phi_{b}^{-1})\log(\phi_{b}^{-1}\cdot(1-\phi_{b}^{-1})) + (1-\phi_{b}^{-1})\cdot \log(1 -(b-2)\phi_{b}^{-1}) \\
& = & \log(\phi_{b}^{-1}) - 2\phi_{b}^{-1} \log(\phi_{b}^{-1}) -(1-\phi_{b}^{-1})\log(\phi_{b}^{-1}) -(1-\phi_{b}^{-1})\log(1-\phi_{b}^{-1}) \\
& = & -\phi_{b}^{-1}\log(\phi_{b}^{-1}) -(1-\phi_{b}^{-1})\log(1-\phi_{b}^{-1})
\end{eqnarray*}  
Thus, applying the equation \ref{Hscaling1} $b-1$ times over the metallic ratio $\phi_{b}^{-1}$, we obtain:
$$h(\beta) = h(\phi_{b}^{-1}) = - \log \left(1 - \phi_{b}^{-1} \right) - \frac{1+ \phi_{b}^{-1}}{b}\cdot \log (\phi_{b}^{-1}),$$
and, therefore, we can establish that:
\begin{eqnarray*}
h(\beta) &= &\frac{1-T_1(\beta)}{T_1(\beta)}\cdot h(T_1(\beta)) + \frac{M(\beta)}{\beta}  =    \phi_{b}^{-1} \cdot h(T_1(\beta)) - \phi_{b}^{-1}\log(\phi_{b}^{-1}) -(1-\phi_{b}^{-1})\log(1-\phi_{b}^{-1}) \\
 & = &  - \log \left(1 - \phi_{b}^{-1} \right) - \frac{1+ \phi_{b}^{-1}}{b}\cdot \log (\phi_{b}^{-1}) \Longrightarrow \\
h(T_1(\beta)) & = & -\log(1-\phi_{b}^{-1})\cdot \left( \frac{1-1+\phi_{b}^{-1}}{\phi_{b}^{-1}} \right) - \log(\phi_{b}^{-1})\cdot \left(\frac{1 + \phi_{b}^{-1}}{b\phi_{b}^{-1}} -1 \right) \\
& = & - \log \left(1 - \phi_{b}^{-1} \right) - \frac{1+ \phi_{b}^{-1}}{b}\cdot \log (\phi_{b}^{-1}) = h(\beta),
\end{eqnarray*}
concluding that the metallic ratio orbits preserve the density entropy $h$.    \qedsymbol\\

\noindent {\bf Acknowledgments}  JC and BL acknowledge funding from Spanish Ministry of Science and Innovation under project M2505 (PID2020-113737GB-I00). LL acknowledges funding from project DYNDEEP (EUR2021-122007) and MISLAND (PID2020-114324GB-C22), both projects funded by the Spanish Ministry of Science and Innovation, and Severo Ochoa and Mar\'ia de Maeztu Program for Centers and Units of Excellence in R\&D (MDM-2017-0711) funded by MCIN/AEI/10.13039/501100011033.  \\

\bibliographystyle{ieeetr}
\bibliography{Renormalization_A_dynamic_over_Farey_Graphs.bib}

\end{document}